\declaretheorem{theorem}
\pgfplotsset{compat=1.16}
\begin{document}

\title[Quantum mutual information redistribution by Number Partitioning algorithm] {Quantum mutual information redistribution by Number Partitioning algorithm}

\author{Muchun Yang}

\affiliation{Institute of Physics, Beijing National Laboratory for Condensed
  Matter Physics,\\Chinese Academy of Sciences, Beijing 100190, China}

\affiliation{School of Physical Sciences, University of Chinese Academy of
  Sciences, Beijing 100049, China}

\author{Cheng-Qian Xu}

\affiliation{Institute of Physics, Beijing National Laboratory for Condensed
  Matter Physics,\\Chinese Academy of Sciences, Beijing 100190, China}

\affiliation{School of Physical Sciences, University of Chinese Academy of
  Sciences, Beijing 100049, China}

\author{D. L. Zhou} \email[]{zhoudl72@iphy.ac.cn}

\affiliation{Institute of Physics, Beijing National Laboratory for Condensed
  Matter Physics,\\Chinese Academy of Sciences, Beijing 100190, China}

\affiliation{School of Physical Sciences, University of Chinese Academy of
  Sciences, Beijing 100049, China}

\date{\today}

\begin{abstract}
  Quantum information distribution in a tripartite state plays a fundamental role in quantum information processes. Here we investigate how a bipartite unitary transformation $U_{AB}$ redistributes the quantum mutual information with the third party $C$ in a tripartite pure state $|\psi\rangle_{ABC}$ in a $d_A\times d_B\times d_C$ dimensional Hilbert space. In particular, we focus on finding out the optimal unitary transformation $U_{AB}^{\ast}$ that maximizes the quantum mutual entropy between party $A$ and party $C$, $I(A:C)=S(\rho_A)-S(\rho_B)+S(\rho_C)$. We show that the mutual entropy $I(A:C)$ is upper bounded by $2S(\rho_C)$ derived from the Araki-Lieb inequality. This upper bound can be realized via an optimal unitary transformation for any pure state with the rank $r_{C}$ of $\rho_C$ satisfying $r_C\le d_A$. For a generic pure state with $r_C> d_A$, the upper bound can not be realized by any bipartite unitary transformation. To maximize the mutual entropy in the latter case, we propose a fast numerical algorithm to produce an approximate optimal unitary transformation, where our optimization is transformed into a modified number partition problem. The validness of our algorithm is confirmed by its comparison with the results from the Adam algorithm for parameterized unitary transformations. Our approximate algorithm thus provides a practical protocol to implement redistribution of quantum mutual information for a tripartite quantum state with high dimensions.
\end{abstract}

\maketitle

\section{Introduction}

Quantum information distribution in a tripartite state is fundamental in quantum information processes, where the strong subadditivity inequality of Von-Neumann entropy gives important constrains on these information distribution. Here we investigate how a bipartite unitary transformation changes the quantum information distribution in a tripartite pure state.

To characterize the quantum information distribution, we use quantum mutual information as a basic measure. Mutual information in Shannon's theory is a fundamental quantity of information transmitting capacity~\cite{https://doi.org/10.1002/j.1538-7305.1948.tb01338.x}. Quantum mutual information is a measure of correlation with information transmission task in quantum state~\cite{PhysRevLett.83.3081,quan_info_meets_quan_matter}. In quantum on-time pad, quantum mutual information is the maximum information that can securely send~\cite{PhysRevA.74.042305,PhysRevA.104.022610}. Quantum mutual information also quantifies the minimal amount of noise needed to erase the correlation in a bipartite state~\cite{PhysRevA.72.032317,PhysRevLett.121.040504}.

In a tripartite pure state, a bipartite unitary transformation can not change the mutual entropy between the two parties and the third party, but it changes the distribution of quantum mutual entropy among them. Thus our aim is to maximize the mutual entropy of the first party and the third one, which is shown to be equivalent to the maximization of the entropy difference of the first party relative to the second party.

The basic application of our mutual information redistribution is as follows. Assume Alice, Bob, and Charlie share an entangled state $|\psi\rangle_{ABC}$, which is used as a quantum resource for performing securely transfer classical information between Alice (and/or Bob) and Charlie. Alice and Bob can redistribute their capacities by performing an optimized bipartite unitary transformation, although their total capacity keeps invariant.

To solve the above maximization, we develop an approximate numerical algorithm, which can be transformed into a modified number partition problem. Number partitioning problem~(NPP) is to partition a group numbers into a fixed number of subsets, such that the sums of each subset are as similar as possible. Finding the exact solution is difficult, which is an NP-hard problem~\cite{Korf1995FromAT,article1}. There are lots of approximate algorithms to give approximate solutions ~\cite{https://doi.org/10.1111/0824-7935.00069,doi:10.1137/0117039,Xiao2017/04,10.1145/800200.806205,Chen1993ANO,doi:10.1137/0603019,CSIRIK1992281,WU2005407}. Recently, physicists also proposed quantum algorithm for NPP~\cite{Graß2016,PRXQuantum.2.020319,sinitsyn2023topologically}. Our algorithm is based on the maximization of the entropy of the first party before the minimization of the second party, which are related with the concavity~\cite{nielsen_chuang_2010} and the majorization~\cite{LI2013384,an2021learning} properties of Von-Neumann entropy respectively. Here the maximization of the entropy of the first party is mapped to a modified number partition problem after a disentanglement unitary transformation. The validness of our algorithm is confirmed by its comparison with the results from the Adam algorithm for parameterized unitary transformations. Our approximate algorithm thus provides a practical protocol to implement redistribution of quantum mutual information for a tripartite quantum state.

\section{The maximization problem of quantum mutual information}
\label{sec:our-main-problem}

For a bipartite quantum state $\rho_{AB}$, the quantum mutual information between $A$ and $B$ is defined as
\begin{align}
    I_{\rho}(A:B)\equiv S(\rho_A)+S(\rho_B)-S(\rho_{AB}),
\end{align}
where the state function $S$ is the von-Neumann entropy: for a quantum state $\sigma$, $S(\sigma)\equiv -\Tr[\sigma\log \sigma]$, and the logarithm is taken on base $2$. The states $\rho_A$ and $\rho_B$ are reduced density matrices of the state $\rho_{AB}$.

For a pure tripartite state $|\Psi\rangle_{ABC}$, a direct calculation leads to the following equality on quantum mutual information:
\begin{align}\label{IABC}
    I_{|\psi\rangle}(AB:C)=I_{|\psi\rangle}(A:C)+I_{|\psi\rangle}(B:C)=2S(\rho_C),
  \end{align}
  where
\begin{align}
    I_{|\psi\rangle}(A:C) &=S(\rho_C)+S(\rho_A)-S(\rho_B),\label{eq:5}\\
    I_{|\psi\rangle}(B:C) &=S(\rho_C)+S(\rho_B)-S(\rho_A).\label{eq:7}
\end{align}
The states $\rho_{A}$, $\rho_B$ and $\rho_C$ are reduced states of $|\psi\rangle_{ABC}$. Eq.~\eqref{IABC} implies that for a pure tripartite state  the quantum mutual information between $C$ and $\{A,B\}$ is completely distributed into the quantum mutual information between $C$ and $A$ and that between $C$ and $B$.

Our main task can be formulated as follows. Assume that Alice, Bob and Charlie share a pure tripartite state $|\psi\rangle_{ABC}$, which is the main resource assisting the communications between Charlie and Alice (or Bob). By performing a unitary transformation $U_{AB}$ between Alice and Bob, the quantum mutual information between Charlie and Alice (or Bob) can be adjustable. Our aim is to maximize the quantum mutual information
\begin{equation}
  \label{eq:1}
  \max_{U_{AB}} I_{U_{AB}|\psi\rangle_{ABC}}(A:C) \equiv I_M.
\end{equation}
Because $U_{AB}$ is a unitary transformation on $AB$ acting on $AB$, the mutual information between $C$ and $AB$ is invariant, i.e.,
\begin{equation}
  \label{eq:2}
  I_{U_{AB}|\psi\rangle_{ABC}}(AB:C) = I_{|\psi\rangle_{ABC}}(AB:C) = 2 S(\rho_{C}).
\end{equation}
This implies that the unitary transformation that maximizes the mutual information between $A$ and $C$ must minimize the mutual information between $B$ and $C$:
\begin{equation}
  \label{eq:3}
  \min_{U_{AB}} I_{U_{AB}|\psi\rangle_{ABC}}(B:C) = 2 S(\rho_C) - I_M.
\end{equation}
In particular, when the dimension of the Hilbert space of party $A$ equals to that of party $B$, the range of the mutual information between $A$ and $C$ under any unitary transformation $U_{AB}$ is given by
\begin{equation}
  \label{eq:4}
  2 S(\rho_C) - I_M \le I_{U_{AB}|\psi\rangle_{ABC}}(A:C) \le I_M.
\end{equation}
Following Eq.~\eqref{eq:5}, the maximization in Eq.~\eqref{eq:1} can be simplified as
\begin{equation}
  \label{eq:8}
  \max_{U_{AB}} [S(\rho_A^U)-S(\rho_B^U)] = I_M - S(\rho_C),
\end{equation}
where
\begin{align}
  \label{eq:9}
  \rho^0_{AB} & = \Tr_C (|\psi\rangle_{ABC}\, {}_{{ABC}}\langle\psi|), \\
  \rho_A^U & = \Tr_{B} (U_{AB} \rho^0_{AB} U_{AB}^{\dagger}), \\
  \rho_B^U & = \Tr_{A} (U_{AB} \rho^0_{AB} U_{AB}^{\dagger}).
\end{align}
We observe that $\rho^0_{AB}$ determines the result of the maximization. The equivalent maximization in Eq.~\eqref{eq:1} and that in Eq.~\eqref{eq:8} are demonstrated in Fig.~\ref{figure0}.

\begin{figure}[htbp]
    \centering
\begin{tikzpicture}
  \begin{yquant*}
    qubit {} q[3];
    init {$|\psi\rangle_{ABC}$} (q[0],q[1],q[2]);
    box {$U_{AB}$} (q[0],q[1]);
    output {$\max_{U_{AB}} I_{U_{AB}|\psi\rangle_{ABC}}(A:C)$} (q[0],q[1]);
  \end{yquant*}
\end{tikzpicture} \\
    \centering
\begin{tikzpicture}
 \draw[<->,blue,thick] (2,0) -- (2,-1);
\end{tikzpicture} \\
    \centering
  \begin{tikzpicture}
  \begin{yquant*}
    qubit {} q[2];
    init {$\rho_{AB}$} (q[0],q[1]);
    box {$U_{AB}$} (q[0],q[1]);
    output {$\max [S(\rho_A^U)-S(\rho_B^U)]$} (q[0],q[1]);
  \end{yquant*}
  \end{tikzpicture}
    \caption{Equivalent pictures of redistribution of quantum mutual information.}
    \label{figure0}
\end{figure}
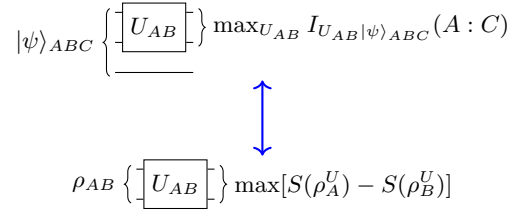

\section{Upper bound of entropy difference and Araki-Lieb inequality}

Eq.~\eqref{eq:8} makes us reminiscent of the Araki-Lieb inequality~\cite{Araki1970,Carlen2012}
\begin{align}
    |S(\rho_A)-S(\rho_B)|\leq S(\rho_{AB}).  \label{eq:6}
\end{align}
In our case, $S(\rho_{AB})=S(\rho_C)$, which is invariant under any unitary transformation $U_{AB}$. Thus $S(\rho_A^U)-S(\rho_B^U)\le S(\rho_C)$, which implies that the entropy difference $\Delta S$ is upper bounded by $S(\rho_C)$. Let us investigate under which condition the upper bound $S(\rho_C)$ for $\Delta S$ can be arrived at. Our main result is summarized in the following theorem.

\begin{theorem}
\label{thm2}
For a tripartite pure state $|\psi_{ABC}\rangle$ with Hilbert space dimension $d_A\times d_B \times d_C$, there exists an optimal unitary transformation $U_{AB}^{\ast}$ such that  $S(\rho_A^{U^{\ast}})-S(\rho_B^{U^{\ast}})=S(\rho_C)$ if and only if the rank of $\rho^{C}$: $r_{C}\le d_A$.
\end{theorem}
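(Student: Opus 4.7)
The plan is to reduce the equality $S(\rho_A^{U^{\ast}}) - S(\rho_B^{U^{\ast}}) = S(\rho_C)$ to the saturation case of the Araki--Lieb triangle inequality applied to $\rho_{AB}^{U}$, exploiting two invariants of a unitary transformation on $AB$: the full spectrum of $\rho_{AB}^{U}$ (so in particular $S(\rho_{AB}^{U}) = S(\rho_C)$ throughout), and its rank, which equals $r_C$ by the purity of $|\psi\rangle_{ABC}$.

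For the ``if'' direction, I would start from the Schmidt decomposition across the $AB|C$ cut, $|\psi\rangle_{ABC} = \sum_{i=1}^{r_C} \sqrt{\lambda_i}\,|\phi_i\rangle_{AB}|c_i\rangle_C$, and exhibit an explicit $U_{AB}^{\ast}$ sending the $r_C$ orthonormal Schmidt vectors $\{|\phi_i\rangle_{AB}\}$ to the family $\{|i\rangle_A \otimes |0\rangle_B\}_{i=1}^{r_C}$, completed arbitrarily by orthogonality on the complement. The hypothesis $r_C\le d_A$ is precisely what certifies that the target vectors are orthonormal in $\mathcal{H}_A\otimes\mathcal{H}_B$, so such a $U_{AB}^{\ast}$ exists. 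The transformed pure state then factorizes as $U_{AB}^{\ast}|\psi\rangle_{ABC} = \bigl(\sum_i \sqrt{\lambda_i}|i\rangle_A|c_i\rangle_C\bigr)\otimes|0\rangle_B$, from which $\rho_B^{U^{\ast}} = |0\rangle\langle 0|$ (so $S(\rho_B^{U^{\ast}})=0$) and $S(\rho_A^{U^{\ast}}) = S(\rho_C)$, giving the desired saturation.

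For the ``only if'' direction, I would invoke the structural characterization of Araki--Lieb equality: whenever $S(\rho_A) - S(\rho_B) = S(\rho_{AB})$, there exists a tensor factorization $\mathcal{H}_A = \mathcal{H}_{A_1}\otimes\mathcal{H}_{A_2}$, a pure state $|\phi\rangle_{A_1 B}$, and a density operator $\sigma_{A_2}$ with $\rho_{AB} = |\phi\rangle\langle\phi|_{A_1 B}\otimes \sigma_{A_2}$. The derivation I would sketch purifies $\rho_{AB}$ by an auxiliary reference $R$; the equality rewrites as $S(BR) = S(B)+S(R)$, i.e.\ $\rho_{BR} = \rho_B\otimes\rho_R$ by the saturation of subadditivity, and any purification of a product state factorizes across the purifying side, producing the claimed tensor split on $\mathcal{H}_A$. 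Applying this to $\rho_{AB}^{U^{\ast}}$ yields $r_C = \mathrm{rank}(\rho_{AB}^{U^{\ast}}) = \mathrm{rank}(\sigma_{A_2}) \le \dim\mathcal{H}_{A_2} \le d_A$, where the last inequality uses $\dim\mathcal{H}_{A_1}\ge 1$.

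The routine part is the explicit construction in the ``if'' direction; the real obstacle is the equality characterization of Araki--Lieb. I expect the delicate step to be arguing that the product structure of $\rho_{BR}$ forces a clean \emph{tensor} decomposition of $\mathcal{H}_A$ (rather than merely a direct-sum or block decomposition), since that is what converts a statement about entropies into the rank bound $r_C\le d_A$ one actually needs.
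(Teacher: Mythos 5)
Your proposal is correct, and your ``if'' direction is exactly the paper's: Schmidt-decompose across the $AB|C$ cut and rotate the $r_C$ Schmidt vectors of $AB$ onto $\{|i\rangle_A\otimes|0\rangle_B\}$, which is possible precisely when $r_C\le d_A$. The ``only if'' direction rests on the same mechanism as the paper's --- saturation forces $I(B:C)=0$, hence $\rho_{BC}=\rho_B\otimes\rho_C$, and the dimension of $A$ must be large enough to purify this product --- but you package it as the general Araki--Lieb equality structure theorem ($\mathcal{H}_A=\mathcal{H}_{A_1}\otimes\mathcal{H}_{A_2}$ with $\rho_{AB}=|\phi\rangle\langle\phi|_{A_1B}\otimes\sigma_{A_2}$), introducing an auxiliary reference $R$ that in this setting is just $C$ itself. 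The paper avoids that heavier lemma entirely: once $\rho^{\ast}_{BC}=\sum_{m,n}p_mq_n|mn\rangle\langle mn|$, it writes the purification explicitly as $|\psi^{\ast}\rangle=\sum_{m,n}\sqrt{p_mq_n}\,|\phi_{mn}\rangle\otimes|m\rangle\otimes|n\rangle$ and observes that the $r_Br_C$ vectors $|\phi_{mn}\rangle$ must be orthonormal in $\mathcal{H}_A$, giving $r_Br_C\le d_A$ and hence $r_C\le d_A$ since $r_B\ge1$. The step you flag as delicate --- upgrading the product structure of $\rho_{BR}$ to a tensor split of $\mathcal{H}_A$ --- is thus unnecessary for the rank bound: counting orthonormal purifying vectors suffices, and it sidesteps the caveat that the factorization is a priori only of the support of $\rho_A$. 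Your route does buy the slightly stronger structural statement about $\rho_{AB}^{U^{\ast}}$, and your final inequality $r_C=\mathrm{rank}(\sigma_{A_2})\le\dim\mathcal{H}_{A_2}\le d_A$ is sound, so the argument goes through either way.
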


\begin{proof}
Let us first prove the ``only if'' part. If the upper bound can be arrived at for a state $|\psi\rangle_{ABC}$, then there exists a unitary transformation $U^{\ast}_{AB}$, such that
\begin{equation}
  \label{eq:47}
 S(\rho_A^{U^{\ast}})-S(\rho_B^{U^{\ast}})=S(\rho_C).
\end{equation}
 Let us introduce the notations:
\begin{align}
  \label{eq:39}
  |\psi^{\ast}\rangle_{ABC} & = U^{\ast}_{AB} |\psi\rangle_{ABC}, \\
  \rho^{\ast}_{ABC} & = |\psi^{\ast}\rangle_{ABC} {}_{ABC}\langle\psi^{\ast}|.
\end{align}
Then the mutual information
\begin{align}
  \label{eq:48}
  I_{\rho^{\ast}}(A:C)& =2S(\rho_C), \\
  I_{\rho^{\ast}}(B:C)& =0. \label{eq:49}
\end{align}
The latter equality implies that
\begin{equation}
  \label{eq:40}
  \rho^{\ast}_{BC} = \rho^{\ast}_B \otimes \rho_C = \sum_{m=1}^{r_B} \sum_{n=1}^{r_C} p_{m} q_n |m n\rangle \langle m n|,
\end{equation}
where $\{|m\rangle\}$ and $\{|n\rangle\}$ are orthogonal normal bases, $p_m, q_n>0$, $\sum_m p_m=\sum_n q_n=1$, $r_B$ and $r_C$ are the ranks of $\rho^{\ast}_B$ and $\rho_C$ respectively. Thus
\begin{equation}
  \label{eq:41}
  |\psi^{\ast}\rangle_{ABC} = \sum_{m=1}^{r_B} \sum_{n=1}^{r_C} \sqrt{p_{m}q_n} |\phi_{mn}\rangle \otimes |m\rangle \otimes |n\rangle,
\end{equation}
where $\{|\phi_{mn}\rangle\}$ is the orthogonal normal basis of the Hilbert space of party $A$. Eq.~\eqref{eq:41} implies that
\begin{align}
  \label{eq:42}
  r_B & \le d_B, \\
  r_B r_C & \le d_A.
\end{align}
Because $r_B\ge1$, we have
\begin{equation}
  \label{eq:45}
  r_C  \le d_A.
\end{equation}

Then we prove the ``if'' part. Our tripartite pure state can be written as
\begin{equation}
  \label{eq:50}
  |\psi_{ABC}\rangle = \sum_{n=1}^{r_C} \sqrt{q_n} |\chi_n\rangle \otimes |n\rangle,
\end{equation}
where $q_n>0$, $\sum_n q_n=1$, $\{|n\rangle\}$ and $\{|\chi_n\rangle\}$ are orthonormal base of Hilbert space of $C$ and $AB$ respectively. Then there exist a unitary transformation such that
\begin{equation}
  \label{eq:51}
  U_{AB}^{\ast} |\chi_n\rangle = |\phi_n\rangle \otimes |1\rangle,
\end{equation}
where $\{|\phi_n\rangle\}$ are orthonormal basis of Hilbert space of $A$, and $|1\rangle$ is a normalized state of $B$. Then
\begin{equation}
  \label{eq:52}
  \rho_{AB}^{\ast} = \sum_{n=1}^{r_C} q_n |\phi_n\rangle \langle \phi_n| \otimes |1\rangle \langle1|,
\end{equation}
which satisfies $S(\rho_A^{\ast})-S(\rho_B^{\ast})=S(\rho_C)$. This completes the  proof of our theorem.
\end{proof}

Because $r_C\le d_C$, we get a direct corollary of the above theorem: if the dimension of Hilbert spaces $A$ and $C$ satisfies $d_A>d_C$, there exists an optimal unitary transformation $U^{\ast}$ such that $S(\rho_A^{\ast})-S(\rho_B^{\ast})=S(\rho_C)$.

For simplicity, we focus on the case with $d_A=d_B=d$ in the following. For a general stat $\rho^{AB}$ in a $d\times d$ Hilbert space, the rank of $\rho^{AB}$ lies the range $[1,d^2]$. Only the states with lower rank in the range $[1,d]$ can arrive at the upper bound $S(\rho_{AB})$ from the Araki-Lieb inequality. For the states with the rank in the range $[d+1,d^2]$, the upper bound $S(\rho_{AB})$ can not arrive at.

When the upper bound of entropy difference can be reached, Eqs.~\eqref{eq:48}\eqref{eq:49} implies that the optimal unitary transformation completely transforms the correlation between $AB$ and $C$ to that between $A$ and $C$, without any correlation left between $B$ and $C$. The upper bound can not be reached by a unitary transformation originates from the limitation of the dimension of the Hilbert space.

\section{Solving mutual information maximization by number partition algorithm}
\label{sec:solv-mutu-inform}

The eigen-decomposition of $\rho_{AB}^{0}$ is
\begin{equation}
    \rho_{AB}^{0}=\sum_{m=0}^{d-1} \sum_{n=0}^{d-1} p_{m n}|\psi_{m n}\rangle\langle\psi_{m n}|.
\end{equation}
We assume these eigenvalues are in the decreasing order, i.e., $p_{m n} \ge p_{m^{\prime} n^{\prime}}$ if $md+n\le m^{\prime}d+n^{\prime}$. First we apply the disentanglement unitary transformation
\begin{equation}
  \label{eq:10}
  D |\psi_{m n}\rangle = |m n\rangle,
\end{equation}
which makes $\rho^0_{AB}$ become a separable state
\begin{equation}
  \label{eq:11}
  \rho^D_{AB} \equiv D \rho^0_{A B} D^{\dagger} = \sum_{m=0}^{d-1} \sum_{n=0}^{d-1} p_{m n} |m n\rangle \langle m n|.
\end{equation}
Then we define a unitary transformation related to a permutation operation $s$
\begin{equation}
  \label{eq:12}
  U_s |m n\rangle = |s(m n)\rangle,
\end{equation}
where $s$ is an element in the permutation group $S_{d^2}$. Thus
\begin{align}
  \label{eq:10}
  \rho^s_{AB} & \equiv U_s \rho^D_{AB} U_s^{\dagger}   = \sum_{m=0}^{d-1} \sum_{n=0}^{d-1} p_{m n} |s(m n)\rangle \langle s(m n)| \nonumber\\
  & = \sum_{m=0}^{d-1} \sum_{n=0}^{d-1} p_{s^{-1}(m n)} |m n\rangle \langle m n|,
\end{align}
where $s^{-1}$ is the inverse of $s$. The reduced states of $\rho^s_{AB}$ are
\begin{align}
  \label{eq:13}
  \rho^s_A & =  \sum_{m=0}^{d-1} p^s_{A m} |m \rangle \langle m |, \\
  \rho^s_B & = \sum_{n=0}^{d-1}  p^s_{B n} | n\rangle \langle  n|.
\end{align}
with
\begin{align}
  \label{eq:19}
  p^s_{A m} & = \sum_{n=0}^{d-1} p_{s^{-1}(m n)}, \\
  p^s_{B n} & = \sum_{m=0}^{d-1} p_{s^{-1}(m n)}.
\end{align}
Hence
\begin{align}
  \label{eq:14}
  S(\rho^s_A) - S(\rho^s_B) & = - \sum_{m=0}^{d-1} p^s_{A m} \log p^s_{A m} \nonumber\\
  & + \sum_{n=0}^{d-1} p^s_{B n} \log p^s_{B n}.
\end{align}
Thus the maximization over the permutation operation is given by
\begin{equation}
  \label{eq:16}
  \max_{s\in S_{d^2}} [S(\rho^s_A) - S(\rho^s_B)].
\end{equation}
In the following, we aim to show that Eq.~\eqref{eq:16} is an excellent substituent of Eq.~\eqref{eq:8} in most cases for our optimization problem.

Before detailed numerical optimization, we explore the symmetry in our problem. Let $r\in S_d$ and $t\in S_d$, and $r\otimes t\in S_d\otimes S_d$, and $S_d\otimes S_d$ is a subgroup of $S_{d^2}$:
\begin{equation}
  \label{eq:15}
  r \otimes t (m n) = (r(m) t(n)).
\end{equation}
Thus we can prove that
\begin{align}
  \label{eq:17}
  S(\rho_A^{r\otimes t \cdot s}) & = S(\rho_A^s), \\
  S(\rho_B^{r\otimes t \cdot s}) & = S(\rho_B^s).\label{eq:20}
\end{align}
The proof of Eq.~\eqref{eq:17} is given as follows:
\begin{align}
  \label{eq:18}
  & S(\rho^{r\otimes t \cdot s}_A)   = - \sum_{m=0}^{d-1} \sum_{n=0}^{d-1} p_{(r\otimes t \cdot s)^{-1}(m n)} \log \sum_{n=0}^{d-1} p_{(r\otimes t \cdot s)^{-1}(m n)} \nonumber\\
  & = - \sum_{m=0}^{d-1} \sum_{n=0}^{d-1} p_{s^{-1} \cdot (r\otimes t)^{-1}(m n)} \log \sum_{n=0}^{d-1} p_{s^{-1} \cdot (r\otimes t)^{-1}(m n)} \nonumber\\
  & = - \sum_{m=0}^{d-1} \sum_{n=0}^{d-1} p_{s^{-1}(r^{-1}(m) t^{-1}(n))} \log \sum_{n=0}^{d-1} p_{s^{-1}(r^{-1}(m) t^{-1}(n))} \nonumber\\
& = - \sum_{m^{\prime}=0}^{d-1} \sum_{n^{\prime}=0}^{d-1} p_{s^{-1}(m^{\prime} n^{\prime})} \log \sum_{n^{\prime}=0}^{d-1} p_{s^{-1}(m^{\prime} n^{\prime})} \nonumber\\
& = S(\rho_A^{s}).
\end{align}
Similarly, we can prove Eq.~\eqref{eq:20}. Thus for any $r, t\in S_d$ and $s\in S_{d^2}$,
\begin{equation}
  \label{eq:21}
  S(\rho_A^{r\otimes t \cdot s}) - S(\rho_B^{r\otimes t \cdot s})  = S(\rho_A^s) - S(\rho_B^{s}),
\end{equation}
which implies that every element $s$ in one right coset of the subgroup $S_d\otimes S_d$ will give  the same value of $S(\rho_A^s)-S(\rho_B^s)$. In other words, the maximization in Eq.~\eqref{eq:16} is taken over the set of all the right cosets of the subgroup $S_d\otimes S_d$, any one element in each coset:
\begin{equation}
  \label{eq:30}
  \max_{s\in S_{d^2}/S_d\otimes S_d} [S(\rho^s_A) - S(\rho^s_B)].
\end{equation}

In particular, we realize that the maximization problem in Eq.~\eqref{eq:16} or in Eq.~\eqref{eq:30} is a type of NPP: $d^2$ numbers $\{p_{m n}\}$ are partitioned into a $d\times d$ lattice, every site with one element. Every partition corresponds to a permutation element. Our maximization is taken over all the ways of partitions. The case of $d=3$ is demonstrated in Table~\ref{tab:1}. To maximize $S(\rho_A^s)-S(\rho_B^s)$, we need to choose a permutation $s$ such that the numbers in $\{p^s_{A m}\}$ are as similar as possible and the numbers in $\{p_{B n}^s\}$ are as different as possible.

Let us consider the number of the permutations in the set $S_{d^2}$ or in the set of $S_{d^2}/S_d\otimes S_d$, which is $(d^2)!$ or $(d^2)!/(d!d!)$. With the increasing of $d$, there numbers become extremely large, e.g., $(4^2)!\simeq2.1\times10^{13}$ and $(4^2)!/(4!4!)\simeq3.6\times10^{10}$, which prevents the numerical optimization directly by the exhaustive attack method.

\begin{table}[htbp]
  \centering
  \begin{tblr}{|c|ccc|}
    \hline
    & $p_{B0}^s$ &  $p_{B1}^s$ & $p_{B2}^s$ \\
    \hline
    $p_{A0}^s$ & $p_{s^{-1}(00)}$ &$p_{s^{-1}(01)}$ &$p_{s^{-1}(02)}$ \\
    $p_{A1}^s$ & $p_{s^{-1}(10)}$ &$p_{s^{-1}(11)}$ &$p_{s^{-1}(12)}$ \\
    $p_{A2}^s$ & $p_{s^{-1}(20)}$ &$p_{s^{-1}(21)}$ &$p_{s^{-1}(22)}$ \\
    \hline
  \end{tblr}
  \caption{Number partition of $\{p_{mn}\}$ with column sums and row sums for $d=3$.}\label{tab:1}
\end{table}

\subsection{$d=2$ case}
\label{sec:d=2-case}

Let us start with the case of $d=2$, where both A and B are one qubit. Let us take a computational basis of $\mathcal{H}_A$ as $ \{|m\rangle,0\le a\le 1\} $ (the eigenvectors of $Z_A$ with eigenvalues $(-1)^{m}$), and a basis of $\mathcal{H}_B$ as $ \{|n\rangle,0\le n\le 1\} $ (the eigenvectors of $Z_B$ with eigenvalues $(-1)^n$). Then for any permutation $s$ we construct a unitary transformation
\begin{align}
  U_s|m n\rangle &= |s(m n)\rangle.
\end{align}
The number of all the unitary transformations related with permutations equals to the order of the permutation group $S_4$, i.e., $4!=24$. We can show that
\begin{align}
  \label{eq:25}
  U_s Z_A U_s^{\dagger} & = (-1)^{c^A_s} Z_A^{a^A_s} Z_B^{b^A_s}, \\
  U_s Z_B U_s^{\dagger} & = (-1)^{c^B_s} Z_A^{a^B_s} Z_B^{b^B_s}. \label{eq:26}
\end{align}

Note that the unitary transformations related with the subgroup $S_2\otimes S_2$ is
\begin{equation}
  \label{eq:22}
  X_A^a \otimes X_B^b, ~~~a,b\in\{0,1\},
\end{equation}
where $X$ is the $x$ component of the Pauli operator defined by
\begin{equation}
  \label{eq:23}
  X |m\rangle = |1-m\rangle,
\end{equation}
or equivalently defined by
\begin{equation}
  \label{eq:24}
  X Z X^{\dagger} = - Z.
\end{equation}
Then Eqs.~\eqref{eq:25}\eqref{eq:26} become
\begin{align}
  \label{eq:27}
  X_A^a X_B^b Z_A X_B^b X_A^a & = (-1)^a Z_A, \\
  X_A^a X_B^b Z_B X_B^b X_A^a & = (-1)^b Z_B.
\end{align}
Thus the representative element in the unitary transformations corresponding to the right cosets of the subgroup $S_2\otimes S_2$ are given by
\begin{align}
  \label{eq:28}
  U_s Z_A U_s^{\dagger} & = Z_A^{a^A_s} Z_B^{b^A_s}, \\
  U_s Z_B U_s^{\dagger} & = Z_A^{a^B_s} Z_B^{b^B_s}.
\end{align}
which means the unitary transformations of $\{Z_A, Z_B\}$ can take the values of any two ordered elements in $\{Z_A, Z_B, Z_A Z_B\}$. This implies there are $P_3^2=6$ representative unitary transformations in the cosets. If we denote $s(m n)=(m^{\prime} n^{\prime}$), then these unitary transformations can be obtained by solving the following equations:
\begin{align}
  \label{eq:29}
  s^{-1}
  \begin{pmatrix}
    m^{\prime} \\ n^{\prime}
  \end{pmatrix}
  =
  \begin{pmatrix}
    m \\ n
  \end{pmatrix}
  \mod 2,
\end{align}
with
\begin{equation}
  \label{eq:31}
  s^{-1} =
  \begin{pmatrix}
    a_s^A & b_s^A \\
    a_s^B & b_s^B
  \end{pmatrix}.
\end{equation}
The above equations has a unique solution if and only if
\begin{equation}
  \label{eq:30}
  \det s^{-1} = a^A_s b^B_s - a^B_s b^A_s \neq 0.
\end{equation}
Then all the permutations in $S_4/S_2 \otimes S_2$  are given by
\begin{align}
  \label{eq:32}
  {s_1}^{-1} & =
  \begin{pmatrix}
    1 & 0 \\
    0 & 1
  \end{pmatrix}, \\
  {s_2}^{-1} & =
  \begin{pmatrix}
    0 & 1 \\
    1 & 0
  \end{pmatrix}, \\
  {s_3}^{-1} & =
  \begin{pmatrix}
    1 & 0 \\
    1 & 1
  \end{pmatrix}, \\
  {s_4}^{-1} & =
  \begin{pmatrix}
    1 & 1 \\
    0 & 1
  \end{pmatrix}, \\
  {s_5}^{-1} & =
  \begin{pmatrix}
    0 & 1 \\
    1 & 1
  \end{pmatrix}, \\
  {s_6}^{-1} & =
  \begin{pmatrix}
    1 & 1 \\
    1 & 0
  \end{pmatrix}.
\end{align}
Their inverse permutations are given by
\begin{align}
  \label{eq:33}
  s_i & = {s_i}^{-1}, \; 1\le i\le 4, \\
  s_5 & = {s_6}^{-1}, \\
  s_6 & = s_5^{-1}.
\end{align}
A direct calculation gives
\begin{align}\label{s5}
     \underset{{s\in S_{4}/S_2\otimes S_{2}}}{\operatorname{argmax}} [S(\rho^s_A) - S(\rho^s_B)] = s_5,
\end{align}
where
\begin{equation}\label{s_2bit}
s_5^{-1}
\begin{pmatrix}
  m \\ n
\end{pmatrix}
=
\begin{pmatrix}
  n \\ m + n \mod 2
\end{pmatrix}.
\end{equation}
Then the state
\begin{align}
  \rho^{s_{5}}_{AB} &    = \sum_{m=0}^{1} \sum_{n=0}^{1} p_{s_{5}^{-1}(m n)} |m n\rangle \langle m n| \nonumber\\
  & = p_{00} |00\rangle \langle 00| + p_{11} |01\rangle \langle01| + p_{01} |10\rangle \langle10| + p_{10} |11\rangle \langle 11|.
\end{align}
The reduced states of $\rho^{s_5}_{AB}$ are

\begin{align}
    \rho^{s_{5}}_A &= (p_{00}+p_{11}) |0\rangle_{A} \langle0| + (p_{01}+p_{10}) |1\rangle_A \langle 1|,   \\
    \rho^{s_{5}}_B &= (p_{00}+p_{01}) |0\rangle_{B} \langle0| + (p_{11}+p_{10}) |1\rangle_B \langle 1|.
\end{align}
Hence the maximum value

\begin{align}
    S(\rho^{s_{5}}_A) - S(\rho^{s_{5}}_B) = & -(p_{00}+p_{11})\log (p_{00}+p_{11})  \nonumber \\
    &-(p_{01}+p_{10})\log (p_{01}+p_{10}) \nonumber \\ 
    & +(p_{00}+p_{01})\log (p_{00}+p_{01}) \nonumber \\ 
    &+(p_{10}+p_{11})\log (p_{10}+p_{11}).
\end{align}

This permutation is demonstrated in Table~\ref{tab:2}.

\begin{table}[htbp]
  \centering
  \begin{tblr}{|c|cc|}
    \hline
    & $p_{B0}^{s_5}$ &  $p_{B1}^{s_5}$  \\
    \hline
    $p_{A0}^{s_5}$ & $p_{00}$ &$p_{11}$\\
    $p_{A1}^{s_5}$ & $p_{01}$ &$p_{10}$\\
    \hline
  \end{tblr}
  \caption{Optimal number partition of $\{p_{mn}\}$ with column sums and row sums for $d=2$, which corresponds to Eq.~\eqref{s5}}.\label{tab:2}
\end{table}

\begin{theorem}
\label{thm1}
For a $2$-qubit state $\rho_{AB}$, the unitary transformation $U^{\ast}=U_{s_{5}}D$ makes $S(\rho_A^U)-S(\rho_B^{U})$ take a local maximum.
\end{theorem}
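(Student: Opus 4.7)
The plan is to check local maximality at $U^{\ast}$ by verifying the critical-point condition and negativity of the Hessian. Parametrize small variations by $U(\epsilon) = e^{i\epsilon H}U^{\ast}$ with $H$ Hermitian, so that $\rho_{AB}(\epsilon) = e^{i\epsilon H}\rho_{AB}^{s_5}e^{-i\epsilon H}$. Expanding $H$ in the two-qubit Pauli basis as $H = \alpha_0 I + \sum_i\alpha_i^A\sigma_A^i + \sum_j\alpha_j^B\sigma_B^j + \sum_{ij}c_{ij}\sigma_A^i\otimes\sigma_B^j$, the first three pieces generate only local unitaries $V_A\otimes V_B$, which conjugate $\rho_A$ and $\rho_B$ separately and hence leave $S(\rho_A^U)$ and $S(\rho_B^U)$ individually invariant to all orders. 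These are flat directions of the objective, so the analysis reduces to the nine non-local coefficients $\{c_{ij}\}$.

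The first step is to verify that $U^{\ast}$ is a critical point. From $\dot\rho_{AB}|_0 = i[H,\rho_{AB}^{s_5}]$ together with $\frac{d}{d\epsilon}S(\rho_A^U)|_0 = -\Tr[\dot\rho_A\log\rho_A^{s_5}]$ (and the analogous expression for $B$), cyclicity of the trace gives
\begin{equation*}
\frac{d}{d\epsilon}\bigl[S(\rho_A^U)-S(\rho_B^U)\bigr]\bigg|_{0} = -i\,\Tr\bigl(H\,[\rho_{AB}^{s_5},K]\bigr),
\end{equation*}
with $K \equiv \log\rho_A^{s_5}\otimes I_B - I_A\otimes\log\rho_B^{s_5}$. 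Since $\rho_{AB}^{s_5}$, $\rho_A^{s_5}$ and $\rho_B^{s_5}$ are all diagonal in the computational basis $\{|mn\rangle\}$, $K$ commutes with $\rho_{AB}^{s_5}$, so the commutator vanishes and the linear variation is zero for every $H$.

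The second step computes the Hessian $\ddot f(0) = \frac{d^2}{d\epsilon^2}[S(\rho_A^U)-S(\rho_B^U)]|_0$. Using the standard second-order expansion of the von Neumann entropy around a diagonal state $\rho^0 = \sum_k\lambda_k|k\rangle\langle k|$,
\begin{equation*}
\ddot S|_0 = -\Tr[\ddot\rho\log\rho^0] - \sum_{k,\ell}|(\dot\rho)_{k\ell}|^2\,g(\lambda_k,\lambda_\ell),
\end{equation*}
with $g(\lambda,\mu) = (\log\lambda-\log\mu)/(\lambda-\mu)$ for $\lambda\neq\mu$ and $g(\lambda,\lambda) = 1/\lambda$, one obtains $\ddot f(0)$ as an explicit quadratic form $Q$ in the nine coefficients $\{c_{ij}\}$. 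The $S_2\otimes S_2$ symmetry discussed earlier block-diagonalizes $Q$ into small pieces whose entries are elementary functions of $p_{00}, p_{01}, p_{10}, p_{11}$.

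The main obstacle is showing $Q\preceq 0$. The contribution from $S(\rho_A^U)$ to $Q$ is negative semi-definite by concavity of the entropy, but the contribution from $-S(\rho_B^U)$ is convex, so negativity of the sum is not automatic; it must be extracted from the ordering $p_{00}\ge p_{01}\ge p_{10}\ge p_{11}$. I expect the decisive inequalities to mirror those already used to select $s_5$ as the maximizer over $S_4/(S_2\otimes S_2)$. A minor technicality is handling the L'H\^opital limit of $g$ when some $p_{mn}$ coincide, which is benign because $g$ extends continuously to all of $(0,\infty)^2$.
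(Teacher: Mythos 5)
Your overall strategy --- verify criticality, then show the Hessian restricted to the non-local directions is negative semidefinite --- is the same as the paper's (Appendix~\ref{app:one}), and your first step is actually tighter than theirs: the observation that $\rho_{AB}^{s_5}$, $\rho_A^{s_5}$ and $\rho_B^{s_5}$ are simultaneously diagonal, so that $K=\log\rho_A^{s_5}\otimes I-I\otimes\log\rho_B^{s_5}$ commutes with $\rho_{AB}^{s_5}$ and the linear variation $-i\Tr\bigl(H[\rho_{AB}^{s_5},K]\bigr)$ vanishes for every Hermitian $H$, replaces the paper's symbolic evaluation of the individual first derivatives with one line. Your second-order entropy expansion with the kernel $g(\lambda,\mu)=(\log\lambda-\log\mu)/(\lambda-\mu)$ is also the correct tool, and the reduction to the nine non-local coefficients via invariance under local unitaries is sound.

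The gap is that you never establish $Q\preceq 0$; you only state that you ``expect the decisive inequalities to mirror those already used to select $s_5$.'' That negativity is the entire content of the theorem. Note that your commutator argument shows \emph{every} diagonal arrangement --- i.e.\ every coset representative $s_1,\dots,s_6$ --- is a critical point, so criticality carries no information distinguishing $s_5$; everything rests on the sign of the Hessian, and, as you yourself observe, the $-S(\rho_B)$ contribution to $Q$ is convex, so nothing structural forces the sum to be non-positive. The paper closes this step by explicit computation: it shows the off-diagonal Hessian entries vanish, that eight of the sixteen parameter directions are exactly flat (they do not change the reduced entropies at all), and that the remaining diagonal entries factor into a non-negative coefficient such as $(p_1-p_2)(p_3-p_4)$ or $(p_2-p_3)(p_1-p_4)$ times a manifestly non-positive logarithmic factor of the form $\ln\frac{1-\abs{x}}{1+\abs{x}}$, plus a separate log-sum estimate for the $h_{11},h_{12},h_{21},h_{22}$ block; all of these use precisely the descending ordering $p_{00}\ge p_{01}\ge p_{10}\ge p_{11}$ you anticipated. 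Until you carry out this computation (or supply a structural argument replacing it), your proposal is a correct plan rather than a proof. A secondary caveat, shared with the paper: a negative \emph{semi}definite Hessian does not by itself certify a local maximum along its kernel, so you must either verify that the degenerate directions are exactly flat (as the paper does for the local and $\sigma_3\otimes\sigma_3$ generators) or restrict to the generic case of strictly ordered $p_{mn}$, where the surviving diagonal entries are strictly negative.
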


\begin{proof}
  Let $U=W U^{\ast}$. Then
\begin{align}
  \Delta S(U) & \equiv S(\rho_A^U) - S(\rho_B^U) \notag \\
  & = S(\Tr_B [W\rho_{AB}^{s_5}W^{\dagger}])- S(\Tr_A [W\rho_{AB}^{s_5}W^{\dagger}]),
\end{align}
where the unitary transformation $W$ can be parameterized as
\begin{align}
  W &= W_{00}(h_{00}) \prod_{j=1}^3 W_{j0}(h_{j0}) \notag \\
  & \phantom{=} \times \prod_{k=1}^3 W_{0j}(h_{0j}) \prod_{m,n=1}^{3} W_{mn}(h_{mn})
\end{align}
with
\begin{equation}
  \label{eq:36}
  W_{m n}(h_{m n}) =\exp(ih_{m n}\sigma_m\otimes \sigma_n).
\end{equation}
Let
\begin{equation}
  \label{eq:34}
  W^{\prime} = \prod_{m,n=1}^{\prime 3} W_{mn}(h_{mn})
\end{equation}
and $U^{\prime}=W^{\prime}U^{\ast}$. Then we can prove
\begin{equation}
  \label{eq:35}
  \Delta S(U) = \Delta S(U^{\prime}) = \Delta S(\{h_{m n}\}^{\prime}),
\end{equation}
where ${}^{\prime}$ implies that it does not contain the term with $m=n=3$.

Then we need to show that
\begin{align}
  \label{eq:77}
  \eval{\pdv{\Delta S}{h_{m n}}}_{\{h_{mn}=0\}} & =0, \\
  \eval{H(\Delta S)}_{\{h_{mn}=0\}} & \le 0,\label{eq:78}
\end{align}
with the Hessian matrix
\begin{equation}
  \label{eq:37}
  H(\Delta S)_{m n; m^{\prime} n^{\prime}} = \pdv{\Delta S}{h_{mn}}{h_{m^{\prime} n^{\prime}}}.
\end{equation}
Eqs.~\eqref{eq:77}\eqref{eq:78} shows that  $U^{\ast}$ is a local maximum of $\Delta S(U)$, whose proofs are given in Appendix~\ref{app:one}.
\end{proof}

There remains a question that, what is the global maximal value of entropy difference $\Delta S(U)$? Fortunately, we have not found any other unitary transformation that get larger $\Delta S(U)$.  We can use the Adam optimizing algorithm~\cite{Kingma2014AdamAM}, which is a Gradient Descending (GD) method, to optimize function $\Delta S(U)$.
Fig.~\ref{2bit_and_3bit}~(a)(b) shows that the numerical results using Adam optimization are actually the same as that using $U^{\ast}$, where the relative error is in the order of $10^{-8}$ numerically. Here the relative error is defined by $(\Delta S_{\text{Adam}}-\Delta S_{\text{permutation}})/\Delta S_{\text{Adam}}$.
 But we failed to prove that $U^{\ast}$ is the global optimal unitary transformation.

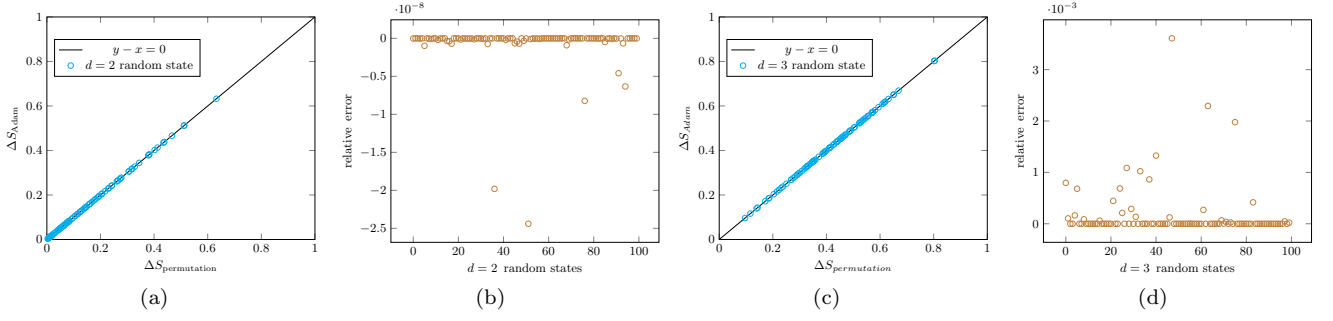
\begin{figure*}[htbp]
\subfigure[]{ 
\begin{tikzpicture}[align=center,xscale=0.52,yscale=0.52]
\begin{axis}[
    xlabel=$\Delta S_{ \text{permutation}}$,
    ylabel=$\Delta S_{ \text{Adam}}$,
    xmin=0, xmax=1,
    ymin=0, ymax=1,
    legend style={at={(0.3,0.9)},
    anchor=north}]
\addplot [color=black, domain=0:1, mark=none] {1*\x};
\addplot [color=cyan,only marks,mark=o, mark size= 2pt] table {2bit.dat};
\legend{$y-x=0$,$d=2$ random state}
\end{axis}
\end{tikzpicture}}
\subfigure[]{
\begin{tikzpicture}[align=center,xscale=0.52,yscale=0.52]
\begin{axis}
[title={},
    ylabel=relative\enspace error,
    xlabel= {$d=2$\enspace random states},
    legend style={at={(0.3,0.9)},
    anchor=north}]
\addplot [brown,mark=o,only marks] table {2bit_error.dat};
\end{axis}
\end{tikzpicture}}
\subfigure[]{
\begin{tikzpicture}[align=center,xscale=0.52,yscale=0.52]
\begin{axis}[
    xlabel=$\Delta S_{ permutation}$,
    ylabel=$\Delta S_{ Adam}$,
    xmin=0, xmax=1,
    ymin=0, ymax=1,
    legend style={at={(0.3,0.9)},
    	anchor=north}]
\addplot [color=black, domain=0:1, mark=none] {1*\x};
\addplot [color=cyan,only marks,mark=o, mark size= 2pt] table {2trit.dat};
\legend{$y-x=0$,$d=3$ random state}
\end{axis}
\end{tikzpicture}}
\subfigure[]{
\begin{tikzpicture}[align=center,xscale=0.52,yscale=0.52]
\begin{axis}
[title={},
    ylabel=relative\enspace error,
    xlabel= {$d=3$\enspace random states},
    legend style={at={(0.3,0.9)},
    anchor=north}]
\addplot [brown ,mark=o,only marks] table {2trit_error.dat};
\end{axis}
\end{tikzpicture}}
\caption{Numerical results of entropy difference based on optimal permutation and Adam optimization in the two-qubit case and two-qutrit case. (a)(c) The maximized entropy difference obtained by permutation $\Delta S_{\text{permutation}}$ via that obtained by Adam optimization $\Delta S_{\text{Adam}}$, where $100$ states are randomly generated in each case. (b)(d) Relative error of $\Delta S_{\text{permutation}}$ with respect to $\Delta S_{\text{Adam}}$.}
\label{2bit_and_3bit}
\end{figure*}

\subsection{$d=3$ case}

In the $d=3$ case, the optimal permutation unitary transformation that maximize the entropy difference $\Delta S$ can not  be given explicitly in a table similar as Table~\ref{tab:2} in the $d=2$ case. We observe that for different $\{p_{mn}\}$, the optimal permutation is different. This arises from a competition between increasing $S_A$ and decreasing $S_B$ in the case of $d=3$, while the optimal permutation maximizes $S_A$ and minimize $S_B$ simultaneously in the $d=2$ case. The numerical results on optimal permutations in the $d=3$ case implies that we should give priority to maximize $S_A$, which is related with the number partition problem. In fact, all optimal permutations we have found are the optimal solutions from the number partitioning problem.

In the Adam algorithm for the $d=3$ case, we take the Gellman matrices as the generators of the unitary transformations, whose derivatives are given in details in Appendix~\ref{app:two}. We numerically checked that the optimal permutations are all local maximal values.

Numerical results on the comparison of optimized entropy differences between the optimal permutation and the Adam algorithm are shown in Fig.~\ref{2bit_and_3bit}~(c)(d). The relative error can arrive at the number of $10^{-3}$ order, which clearly demonstrates that the optimal permutation is in general not a global maximum of the entropy difference $\Delta S$. In the other hand, the entropy difference from the optimal permutation is a relatively excellent approximation of the global maximum of $\Delta S$.

\subsection{$d\ge 4$ cases}
\label{sec:dge-4-cases}

For the systems with $d\geq4$, as discussed in the paragraph before Sec.~\ref{sec:d=2-case}, it is difficult to find optimal permutation directly. Motivated by the experience in the case of $d=3$, we take two steps to find the optimal permutation: first, find one permutation that maximizes $S(\rho_A)$; second, minimize $S(\rho_B)$ while keeping $S(\rho_A)$ invariant. The convenient way to describe the above two steps by visualizing permutations in Table~\ref{tab:1}, the first step is to make the row sums as equal as possible, which is the same aim as the number partition problem. In the second step, to keep $S(\rho_A)$ invariant, we keep the row invariant for every number. To decrease the entropy of $S(\rho_B)$, we only need to arrange the numbers in every row in the decreasing order. In the following, we will develop an approximate algorithm to find one optimized permutation that maximizes $S(\rho_A)$ in the first step.

Before present our algorithm, we first review the greedy number partitioning~(GNP) algorithm for number partitioning~\cite{Korf1995FromAT}, on which our algorithm is constructed.  In the number partition problem, we aim to partition $n$ numbers into $k$ set such that the sums of every set are as equal as possible. The GNP algorithm can be stated as follows. First sort the numbers in the descending order, and place the largest k numbers into k set. Then process the remaining numbers sequentially, put the next number to a set that the sum of the set is currently smallest. We write the GNP in the form of pseudo code in Algorithm~\ref{alg:one}. This method is also called Longest-Processing-Time-First scheduling. It has an approximation ratio that in the worst case, the largest sum in the greedy partition is at most $\frac{4k-1}{3k}$ times the optimal largest sum~\cite{doi:10.1137/0117039,Xiao2017/04,10.1145/800200.806205,Chen1993ANO}. And the minimum sum is at least $\frac{3k-1}{4k-2}$ times the optimal smallest sum~\cite{doi:10.1137/0603019,CSIRIK1992281,WU2005407}.

\begin{algorithm}
\caption{Greedy Number Partitioning (GNP)}\label{alg:one}
\KwIn{$n$ numbers and the number of partitions $k$}
\KwOut{$k$ sets, with an indefinite number of numbers in each set}
sort $n$ numbers in descending order\;
create $k$ empty sets\;
\For{$num$ {\rm in the first} $k$ {\rm largest numbers}}
{append $num$ to one empty set\;}
\For{$num$ {\rm in the remained} $n-k$ {\rm numbers}}
{sum each set, and append $num$ to the set that has the smallest sum\;}
\For{{\rm each set}}
{sort set in descending order\;}
\Return $k$ sets
\end{algorithm}

Note that in the GNP algorithm, the size of every set is not required to be the same. For our problem, however, we need to partition $d^{2}$ numbers, $\{p_{mn}\}$,  into $d$ rows, every row has exactly $d$ numbers. The relative number partion problem can be stated as follows.  To partition $k_Ak_B$ numbers into $k_A$ sets with each set having $k_B$ numbers, such that the sum of numbers in every set as equal as possible. In our algorithm, we recurrently call function GNP, and we name it recurrent greedy number partitioning (RGNP). The procedure of the RGNP is given as follows. In the first partitioning iteration, by applying GNP to $k_A k_B$ numbers, we get $k_0$ sets having $k_B$ numbers, $k_0^+$ sets having more than $k_B$ numbers, $k_0^-$ sets having less than $k_B$ numbers. We cut all the smaller numbers from any set having more than $k_B$ numbers until the set remains $k_B$ numbers. Keep the sets having exact $k_B$ numbers remained. Then mix all cut numbers with sets that have less than $k_B$ numbers, and using GNP to partition them into $k_0^-$ sets. Then we get $k_1$ sets having $k_B$ numbers, $k_1^+$ sets having more than $k_B$ numbers, $k_1^-$ sets having less than $k_B$ numbers. Then repeat the procedure above, until the $j$-th iteration, $ k_j^+  =0 $, the iteration is finished. We write RGNP in the form of pseudo code in Algorithm~\ref{alg:two}. In Appendix~\ref{app:three}, we give an example for the RGNP algorithm.

\begin{algorithm}
\caption{Recurrent Greedy Number Partitioning (RGNP)}\label{alg:two}
\KwIn{$k_A k_B$ eigenvalues of initial state}
\KwOut{$k_A$ sets with each set containing $k_B$ numbers in descending order}
$partition_{done} = {\rm empty~set}$\;
$set_{todo} =k_A k_B$ eigenvalues of initial state\;
$k, ~k_+ = 0$\;
$k_- = k_B$\;
$i = 0$\;
\While{$k_+ \neq 0$}{
$i$ \KwTo $i+1$\;
$partition^i$ = GNP($set_{todo}$, $k_-$)\;
$set_{todo} = {\rm empty~set}$\;
$k, ~k_+, ~k_- = 0$\;
\For{set {\rm in} $partition^i$}{
  \If{{\rm the number of elements in} set $\le$ $k_B$}{append numbers in $set$ into $set_{todo}$\; $k_- \gets k_- + 1$\;}
\If{{\rm the number of elements in} set $\ge$ $k_B$}{remain the first largest $k_B$ numbers, cut other numbers\; append cut numbers into $set_{todo}$, append the set containing remained numbers into $partition_{done}$\; $k_+ \gets k_+ + 1$\;}
\If{{\rm the number of elements in} set $=$ $k_B$}{append $set$ into $partition_{done}$\; $k \gets k + 1$\;}
}
}

\Return $partition_{done}$
\end{algorithm}

The time complexity of greedy number partitioning is $O(d^2\log d^2)$~\cite{Korf1995FromAT}, where $d$ is the dimension of one subsystem. As for Gradient Descending algorithm, there is not a general expression $O(\cdot)$ of time complexity. But we know that we need to optimize $d^2\cdot d^2-1$, which is $O(d^2\cdot d^2)$ independent parameters using GD, which is much slower than greedy number partitioning.
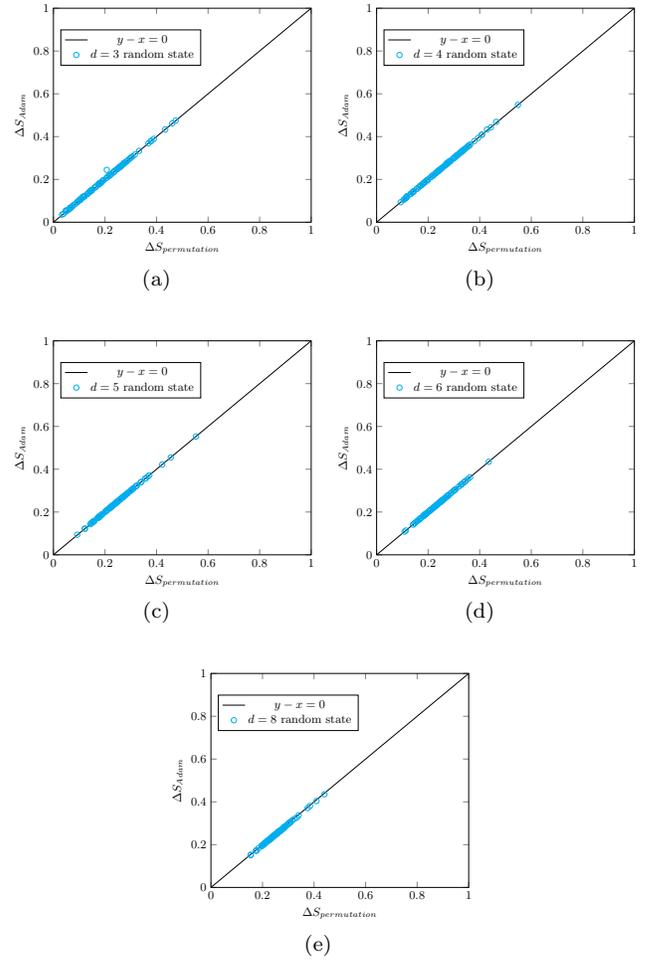
\begin{figure}[htbp]
\subfigbottomskip=10pt
\subfigure[]{
    \begin{tikzpicture}[align=center,xscale=0.5,yscale=0.5]
    \begin{axis}[xlabel=$\Delta S_{ permutation}$, ylabel=$\Delta S_{ Adam}$,xmin=0, xmax=1,ymin=0, ymax=1, legend style={at={(0.3,0.9)},anchor=north}]
    \addplot [color=black, domain=0:1, mark=none] {1*\x};
    \addplot [color=cyan,only marks,mark=o, mark size= 2pt] table {2trit_greedy.dat};
    \legend{$y-x=0$,$d=3$ random state}
    \end{axis}
    \end{tikzpicture}}\label{2trit_greedy}
\subfigure[]{
    \begin{tikzpicture}[align=center,xscale=0.5,yscale=0.5]
    \begin{axis}[xlabel=$\Delta S_{ permutation}$, ylabel=$\Delta S_{ Adam}$,xmin=0, xmax=1,ymin=0, ymax=1,legend style={at={(0.3,0.9)},anchor=north}]
    \addplot [color=black, domain=0:1, mark=none] {1*\x};
    \addplot [color=cyan,only marks,mark=o, mark size= 2pt] table {4bit.dat};
    \legend{$y-x=0$,$d=4$ random state}
    \end{axis}
    \end{tikzpicture}}\label{4bit_greedy}
\subfigure[]{
    \begin{tikzpicture}[align=center,xscale=0.5,yscale=0.5]
    \begin{axis}[xlabel=$\Delta S_{ permutation}$, ylabel=$\Delta S_{ Adam}$,xmin=0, xmax=1,ymin=0, ymax=1,legend style={at={(0.3,0.9)},anchor=north}]
    \addplot [color=black, domain=0:1, mark=none] {1*\x};
    \addplot [color=cyan,only marks,mark=o, mark size= 2pt] table {5level.dat};
    \legend{$y-x=0$,$d=5$ random state}
    \end{axis}
    \end{tikzpicture}}\label{5level_greedy}
\subfigure[]{
    \begin{tikzpicture}[align=center,xscale=0.5,yscale=0.5]
    \begin{axis}[xlabel=$\Delta S_{ permutation}$, ylabel=$\Delta S_{ Adam}$,xmin=0, xmax=1,ymin=0, ymax=1,legend style={at={(0.3,0.9)},anchor=north}]
    \addplot [color=black, domain=0:1, mark=none] {1*\x};
    \addplot [color=cyan,only marks,mark=o, mark size= 2pt] table {6level.dat};
    \legend{$y-x=0$,$d=6$ random state}
    \end{axis}
    \end{tikzpicture}}\label{6level_greedy}
\subfigure[]{
    \begin{tikzpicture}[align=center,xscale=0.5,yscale=0.5]
    \begin{axis}[xlabel=$\Delta S_{ permutation}$, ylabel=$\Delta S_{ Adam}$,xmin=0, xmax=1,ymin=0, ymax=1,legend style={at={(0.3,0.9)},anchor=north}]
    \addplot [color=black, domain=0:1, mark=none] {1*\x};
    \addplot [color=cyan,only marks,mark=o, mark size= 2pt] table {6bit.dat};
    \legend{$y-x=0$,$d=8$ random state}
    \end{axis}
    \end{tikzpicture}
}

\caption{Numerical results of entropy difference based on greedy permutation and Adam optimization for $d=3, 4, 5, 6$ and $8$ cases corresponding to (a)$\sim$(e). The maximized entropy difference obtained by permutation $\Delta S_{\text{permutation}}$ via that obtained by Adam optimization $\Delta S_{\text{Adam}}$, where $100$ states are randomly generated in each case.}
\label{6bit_greedy}
\end{figure}

\begin{table}[htbp]
  \centering
  \begin{tblr}{ccccc}
    \hline
     $d=3$ & $d=4$ & $d=5$ & $d=6$ & $d=8$&\\
    \hline
     $1.1378\%$ &$0.2503\%$ &$0.1017\%$ &$0.0266\%$ &$-1.2422\%$& \\
    \hline
  \end{tblr}
  \caption{Relative error of greedy permutation for $d=3, 4, 5, 6$ and $8$ cases. Relative error is defined by $(\Delta S_{\text{Adam}}-\Delta S_{\text{permutation}})/\Delta S_{\text{Adam}}$. Minus relative error means RGNP algorithm is better than Adam optimization algorithm.}\label{tab:3}
\end{table}

We use this method to maximize the entropy difference for $d=3, 4, 5, 6$ and $8$ cases. We generate $100$ random states for each case, and compare with Adam algorithm. Here we set the GD algorithm convergence condition is $\Delta S<1\times10^{-8}$, which we find this condition is sufficient to converge. The results are shown in Fig.~\ref{6bit_greedy}. And relative error is shown in Table.~\ref{tab:3}. We find that the maximized entropy differences agrees well in all the cases, and the average relative error become smaller with the increasing of $d$. In particular, for a larger system, e.g. the systems with $d=8$, the maximized entropy difference from RGNP is larger than those from Adam algorithm. In other words, when the system has a larger dimension, our RGNP is better than the Adam in most cases, which may be attributed to too many local maximums for a larger system such that the Adam finds only one of local maximums in most cases.  The numerical results show that our RGNP is an excellent approximate algorithm to present a protocol to maximize the entropy difference by an analytical unitary transformation.

\section{Summary and discussion}

We introduce a mutual entropy redistribution protocol via a bipartite unitary transformation in a tripartite pure state. We show that if the dimension Hilbert space of Alice is not less than that of Charlie, then the mutual entropy between Charlie and the other two can be completely redistributed into that between Charlie and Alice. Otherwise, the maximization of the mutual entropy between Alice and Charlie via a bipartite unitary transformation becomes complex, especially when the Hilbert space dimension of Alice (and Bob) is large.

Further more we develop an approximate algorithm for the above complex problem, which is based on greedy number partitioning algorithm, and combined with applications of the two basic properties of Von-Neumann entropy, majorization and concavity. Our numerical experiments show that in small system, this algorithm gives a nearly same results as the optimal result using gradient descending algorithm. In large system, our algorithm gives better results with a faster speed. It is the first time that number partitioning algorithm is used for quantum information processing. In practice, this algorithm helps us get the unitary for redistributing mutual information with a relative small time complexity.

However, we do not understand completely why our approximate RGNP algorithm is so successful for our maximization problem. In the RGNP algorithm we first disentangle the initial state $\rho_{AB}^{0}$, then we perform an optimal permutation related unitary transformation to complete the maximization, where the final optimal state keeps disentangled. Our RGNP algorithm implies that there exists little entanglement between $A$ and $B$ in the optimized state, which is possible due to the monogamy of entanglement~\cite{PhysRevA.61.052306}, i.e., the entanglement between $A$ and $B$ decreases some degrees to maximize the mutual entropy between $A$ and $C$. The theoretical analysis of our RGNP algorithm is left as an open problem to be investigated in future.

\begin{acknowledgments}
This work is supported by National Key Research and Development Program of China (Grant No. 2021YFA0718302 and No. 2021YFA1402104), National Natural Science Foundation of China (Grants No. 12075310), and the Strategic Priority Research Program of Chinese Academy of Sciences (Grant No. XDB28000000).
\end{acknowledgments}

\appendix

\onecolumngrid

\section{Proof of Theorem 2}\label{app:one}

In this part, we aim to prove Theorem 2 in the main text. Specifically, we are going to prove Eqs.~\eqref{eq:77} and \eqref{eq:78} by evaluating second derivatives of the function with respect to parameters $h_{mn}$
\begin{align}
  \Delta S(U) & \equiv S(\rho_A^U) - S(\rho_B^U) \notag \\
  & = S(\Tr_B [W\rho_{AB}^{s_5}W^{\dagger}])- S(\Tr_A [W\rho_{AB}^{s_5}W^{\dagger}]),
\end{align}
where $S$ is von-Neumann entropy, $U=WU^*$ is unitary transformation, $U^*$ is the optimal transformation and unitary transformation $W$ can be parameterized as
\begin{align}
  W &= W_{00}(h_{00}) \prod_{j=1}^3 W_{j0}(h_{j0}) \notag \\
  & \phantom{=} \times \prod_{k=1}^3 W_{0j}(h_{0j}) \prod_{m,n=1}^{3} W_{mn}(h_{mn})
\end{align}
with
\begin{equation}
  W_{m n}(h_{m n}) =\exp(ih_{m n}\sigma_m\otimes \sigma_n).
\end{equation}

\begin{proof}
We can take $\eval{\frac{\partial \Delta S}{\partial h_{31}}}_{h_{31}=0}$ and $\eval{\frac{\partial^2 \Delta S}{\partial h_{31}^2}}_{h_{31}=0}$ for example,
\begin{align}
    W_{31}(h_{31})&=\exp(ih_{31}\sigma_3\otimes \sigma_1) \\ \nonumber
    &=\begin{pmatrix}
        \cos h_{31}&i\sin h_{31}&0&0\\
        i\sin h_{31}&\cos h_{31}&0&0\\
        0&0&\cos h_{31}&-i\sin h_{31}\\
        0&0&-i\sin h_{31}&\cos h_{31}
    \end{pmatrix}.
\end{align}
The density matrix $\rho_{AB}^{W_{31}V^*}$ is
\begin{align}
    \rho_{AB}^{W_{31}V^*}(h_{31})=W_{31}\cdot\rho_{AB}^{V^*}\cdot W_{31}^{\dagger},
\end{align}
and the corresponding reduced density matrix $\rho_{A}^{W_{31}V^*}$ and $\rho_{B}^{W_{31}V^*}$ are 
\begin{align}
    &\rho_{A}^{W_{31}V^*}(h_{31})=\begin{pmatrix}
        p_1+p_4&0\\
        0&p_2+p_3
    \end{pmatrix},\\ \nonumber
     &\rho_{B}^{W_{31}V^*}(h_{31})=\begin{pmatrix}
        \frac{1}{2}+(p_1+p_2-\frac{1}{2})\cos(2h_{31})&-i(-1+2p_1+2p_3)\cos(h_{31})\sin(h_{31})\\
        i(-1+2p_1+2p_3)\cos(h_{31})\sin(h_{31})&\frac{1}{2}-(p_1+p_2-\frac{1}{2})\cos(2h_{31})
    \end{pmatrix}.
\end{align}
Then we evaluate $\Delta S(W_{31}(h_{31})U^*)=S(\rho_A^{W_{31}V^*}(h_{31}))-S(\rho_B^{W_{31}V^*}(h_{31}))$. Finally we can obtain 
\begin{align}
\eval{\frac{\partial \Delta S}{\partial h_{31}}}_{h_{31}=0}&=\lim_{h_{31}\to0} \bigg[\frac{\Delta S(W_{31}(h_{31})U^*)-\Delta S(W_{31}(0)U^*)}{h_{31}}\bigg], \\ \nonumber
\eval{\frac{\partial^2 \Delta S}{\partial h_{31}^2}}_{h_{31}=0}&=\lim_{h_{31}\to0} \bigg[\frac{\frac{\partial \Delta S}{\partial h_{31}}(h_{31})-\frac{\partial \Delta S}{\partial h_{31}}(0)}{h_{31}}\bigg].
\end{align}

We evaluate these expressions above by Mathematica 12.0 and we find that all first-order derivatives are equal to $0$ at $W=0$(these expressions are not shown below because they are too long) while diagonal second-order derivatives are less than or equal to $0$:
\begin{align}
    &\frac{\partial^2 \Delta S}{\partial h_{00}^2}=\frac{\partial^2 \Delta S}{\partial h_{01}^2}=\frac{\partial^2 \Delta S}{\partial h_{02}^2}=\frac{\partial^2 \Delta S}{\partial h_{03}^2}=\frac{\partial^2 \Delta S}{\partial h_{10}^2}=\frac{\partial^2 \Delta S}{\partial h_{20}^2}=\frac{\partial^2 \Delta S}{\partial h_{30}^2}=\frac{\partial^2 \Delta S}{\partial h_{33}^2}=0, \label{useless_parameter}\\  \nonumber
    \\ \nonumber
    &\frac{\partial^2 \Delta S}{\partial h_{11}^2}=\frac{\partial^2 \Delta S}{\partial h_{12}^2}=\frac{\partial^2 \Delta S}{\partial h_{21}^2}=\frac{\partial^2 \Delta S}{\partial h_{22}^2}\\ 
    &=\frac{2}{\ln2}\big[(p_1+p_2-p_3-p_4)\ln(\frac{p_3+p_4}{p_1+p_2})+(p_1+p_4-p_2-p_3)\ln(\frac{p_1+p_4}{p_2+p_3})\big]\leq 0, \label{useful_parameter1} \\ \nonumber
    \\
    &\frac{\partial^2 \Delta S}{\partial h_{13}^2}=\frac{\partial^2 \Delta S}{\partial h_{23}^2}=\frac{8(p_1-p_2)(p_3-p_4)\ln[\frac{1-|-1+2p_2+2p_3|}{1+|-1+2p_2+2p_3|}]}{|-1+2p_2+2p_3|\ln(2)}\leq 0,\label{useful_parameter2} \\ \nonumber 
    \\
    &\frac{\partial^2 \Delta S}{\partial h_{31}^2}=\frac{\partial^2 \Delta S}{\partial h_{32}^2}=\frac{8(p_2-p_3)(p_1-p_4)\ln[\frac{1-|-1+2p_1+2p_2|}{1+|-1+2p_1+2p_2|}]}{|-1+2p_1+2p_2|ln(2)}\leq 0.\label{useful_parameter3} \\ \nonumber 
\end{align}
The sign of Eq.~\eqref{useful_parameter1} is determined by 
\begin{align} \label{scaling_method}
    &(p_1+p_2-p_3-p_4)\ln(\frac{p_3+p_4}{p_1+p_2})+(p_1+p_4-p_2-p_3)\ln(\frac{p_1+p_4}{p_2+p_3}) \\ \nonumber
    \leq&(p_1+p_2-p_3-p_4)\ln(\frac{p_3+p_4}{p_1+p_2})+(p_1+p_2-p_3-p_4)\ln(\frac{p_1+p_2}{p_3+p_4}) \\ \nonumber
    =&0.
\end{align}
Assuming $p_1\neq p_2\neq p_3\neq p_4$, all the diagonal second derivatives in Eqs.~\eqref{useful_parameter1},~\eqref{useful_parameter2} and ~\eqref{useful_parameter3} are less than $0$.

As for those off-diagonal derivatives terms, we can repeat the procedure above by setting 
\begin{align}
    W_{mn,m'n'}&=\exp(ih_{mn}\sigma_m\otimes \sigma_n+ih_{m'n'}\sigma_{m'}\otimes \sigma_{n'}).
\end{align}
By evaluating using Mathematica, we find all off-diagonal terms $\frac{\partial^2 \Delta S}{\partial h_{mn}h_{m'n'}}$ are equal to $0$. So the hessian matrix $H(\Delta S)_{m n; m^{\prime} n^{\prime}}$ is negative-definite and $\Delta S(U^*)$ is at its local maximal point. 

During the computation we find that parameters in Eq.~\eqref{useless_parameter} have no effect on reduced density matrix, i.e. for $m,n$ in Eq.~\eqref{useless_parameter}, $W_{mn}\cdot \rho_A(V)\cdot W_{mn}^{\dagger}=\rho_A(V)$ and $W_{mn}\cdot \rho_B(V)\cdot W_{mn}^{\dagger}=\rho_B(V)$. So we can remove such parameters from hessian matrix and only maintain those "useful" parameters that have non-zero second derivatives.
  
\end{proof}

\section{Derivatives of two-qutrit system}\label{app:two}
In this part, we aim to evaluate the derivatives of entropy difference function of two-qutrit system with respect to parameters of parameterized unitary transformation. In $d=3$ cases we cannot give a theorem like theorem 2 because there is no consistent result for different $d=3$ states. Let $d=3$ state after being applied the disentanglement unitary transformation and permutation unitary transformation be 
\begin{align}
    \rho_{qutrit}=s_1|00\rangle\langle00|+s_2|01\rangle\langle01|+s_3|02\rangle\langle02|+...+s_9|22\rangle\langle22|,
\end{align}
where $s_i$ are eigenvalues after permutation.
There are in total 3 kinds of forms for two-qutrit states $\rho_{qutrit}$ . The first form is, taking $\frac{\partial^2 \Delta S}{\partial h_{12}^2}$ for example:
\begin{align}\label{2trit_derivative3}
\frac{\partial^2 \Delta S}{\partial h_{12}^2}=\frac{2}{\ln2}\bigg[
(s_1+s_2-s_4-s_5)\ln(\frac{s_1+s_2+s_3}{s_4+s_5+s_6}) + (-s_1+s_2-s_4+s_5)\ln(\frac{s_1+s_4+s_7}{s_2+s_5+s_8})\bigg].
\end{align}
The second form is, taking $\frac{\partial^2 \Delta S}{\partial h_{13}^2}$ for example:
\begin{align}
   \frac{\partial^2 \Delta S}{\partial h_{13}^2}= {\rm Part1}\cdot {\rm Part2},
\end{align}
where 
\begin{align}\label{2trit_derivative1}
    {\rm Part1}&=s_3s_4+s_3s_5-4s_4s_5-(s_4+s_5)s_6+s_2(-s_3+4s_4+s_6)+s_1(-4s_2-s_3+4s_5+s_6)\\ \nonumber 
    &=4(s_1-s_4)(s_2-s_5)+(s_6-s_3)(s_4+s_5)+(s_1+s_2)(s_3-s_6),\\ 
    {\rm Part2}&=-\frac{2}{\ln2}\frac{\ln\big(\frac{s_1+s_2+s_3+s_4+s_5+s_6-|s_1+s_2+s_3-s_4-s_5-s_6|}{s_1+s_2+s_3+s_4+s_5+s_6+|s_1+s_2+s_3-s_4-s_5-s_6|}\big)}{|s_1+s_2+s_3-s_4-s_5-s_6|}>0.
\end{align}
The third form is, taking $\frac{\partial^2 \Delta S}{\partial h_{18}^2}$ for example:
\begin{align}
    \frac{\partial^2 \Delta S}{\partial h_{18}^2}={\rm Part1}\cdot {\rm Part2},
\end{align}
where 
\begin{align}\label{2trit_derivative2}
    {\rm Part1}&=(s_1+s_2-s_4-s_5)(s_3-s_6)<0,\\ 
    {\rm Part2}&=-\frac{6}{\ln2} \frac{\ln\big(\frac{s_1+s_2+s_3+s_4+s_5+s_6-|s_1+s_2+s_3-s_4-s_5-s_6|}{s_1+s_2+s_3+s_4+s_5+s_6+|s_1+s_2+s_3-s_4-s_5-s_6|}\big)}{|s_1+s_2+s_3-s_4-s_5-s_6|} >0.
\end{align}
 Note that we cannot determine the sign of Eq.~\eqref{2trit_derivative3} and Eq.~\eqref{2trit_derivative1} since different $\{s_i\}$ will result in different signs. However, by plugging the eigenvalues of $\rho_{qutrit}$ applied by optimal permutations into these derivative expressions, we find that these optimal permutations actually make $S(\rho_A)-S(\rho_B)$ take a local maximum.

\section{Examples of recurrent greedy number partitioning}\label{app:three}

In this appendix, we give an example for the recurrent greedy number partitioning algorithm. 
For a $d=6$ state, the eigenvalues are:

$p$=\{0.184233, 0.172701, 0.167875, 0.130484, 0.007168, 0.006866, 0.005525, 0.00415, 0.002577, 0.002313, 0.101274, 0.009832, 0.008887, 0.008416, 0.007561, 0.006997, 0.006116, 0.004571, 0.003275, 0.000357, 0.000128, 0.043433, 0.011384, 0.011262, 0.010695, 0.010573, 0.010166, 0.010124, 0.009745, 0.008469, 0.008223, 0.007007, 0.006151, 0.005061, 0.003894, 0.002506\}

We first partition ${p}$ into 6 sets:
\begin{align}
s_1=&\{0.184233\},\\   \nonumber
s_2=&\{0.172701\},\\   \nonumber
s_3=&\{0.167875\},\\   \nonumber
s_4=&\{0.130484, 0.007168, 0.006866, 0.005525, 0.00415, 0.002577,|\\ \nonumber
&0.002313\},\\   \nonumber
s_5=&\{0.101274, 0.009832, 0.008887, 0.008416, 0.007561, 0.006997,|\\ \nonumber
&0.006116, 0.004571, 0.003275, 0.000357, 0.000128\},\\   \nonumber
s_6=&\{0.043433, 0.011384, 0.011262, 0.010695, 0.010573, 0.010166,|\\ \nonumber
&0.010124, 0.009745, 0.008469, 0.008223, 0.007007, 0.006151, 0.005061, 0.003894, 0.002506\}.
\end{align}

So we get 3 sets having more than 6 numbers. We denote the number of sets having
$6$ numbers in the $i$-th iteration is $k_i$, the number of sets having more than $6$ numbers is $k_i^+$, the number of sets having less than $6$ numbers is $k_i^-$. We cut numbers after the “$|$” and put these cut numbers with $s_1$, $s_2$ and $s_3$, and get a new set $p^1$ to partition, where $k^-_1=3$:

$p^1$=\{0.184233, 0.172701, 0.167875, 0.002313, 0.006116, 0.004571, 0.003275, 0.000357, 0.000128, 0.010124, 0.009745, 0.008469, 0.008223, 0.007007, 0.006151, 0.005061, 0.003894, 0.002506\}.

Then we partition $p^1$ into 3 sets:
\begin{align}
s_1=&\{0.184233, 0.007007, 0.005061, 0.003894, 0.000128\},\\   \nonumber
s_2=&\{0.172701, 0.009745, 0.008223, 0.006116, 0.003275, 0.000357\},\\   \nonumber
s_3=&\{0.167875, 0.010124, 0.008469, 0.006151, 0.004571, 0.002506, 0.002313\}.
\end{align}

We find that $s_2$ has 6 numbers, $s_1$ has 5 numbers and $s_3$ has 7 numbers, so we move the last number in $s_3$ to $s_1$, and finish the recurrent greedy number partitioning. The final result is:

\begin{align}
s_1=&\{0.184233, 0.007007, 0.005061, 0.003894, 0.000128, 0.002313\},\\   \nonumber
s_2=&\{0.172701, 0.009745, 0.008223, 0.006116, 0.003275, 0.000357\},\\   \nonumber
s_3=&\{0.167875, 0.010124, 0.008469, 0.006151, 0.004571, 0.002506\},\\   \nonumber
s_4=&\{0.130484, 0.007168, 0.006866, 0.005525, 0.00415, 0.002577\},\\ \nonumber
s_5=&\{0.101274, 0.009832, 0.008887, 0.008416, 0.007561, 0.006997\},\\   \nonumber
s_6=&\{0.043433, 0.011384, 0.011262, 0.010695, 0.010573, 0.010166\}.
\end{align}

\section{Relative error of RGNP algorithm}\label{app:four}
Fig.~\ref{fig:app:four} gives relative error of the RGNP algorithm for $d=3,4,5,6$ and $8$ cases.
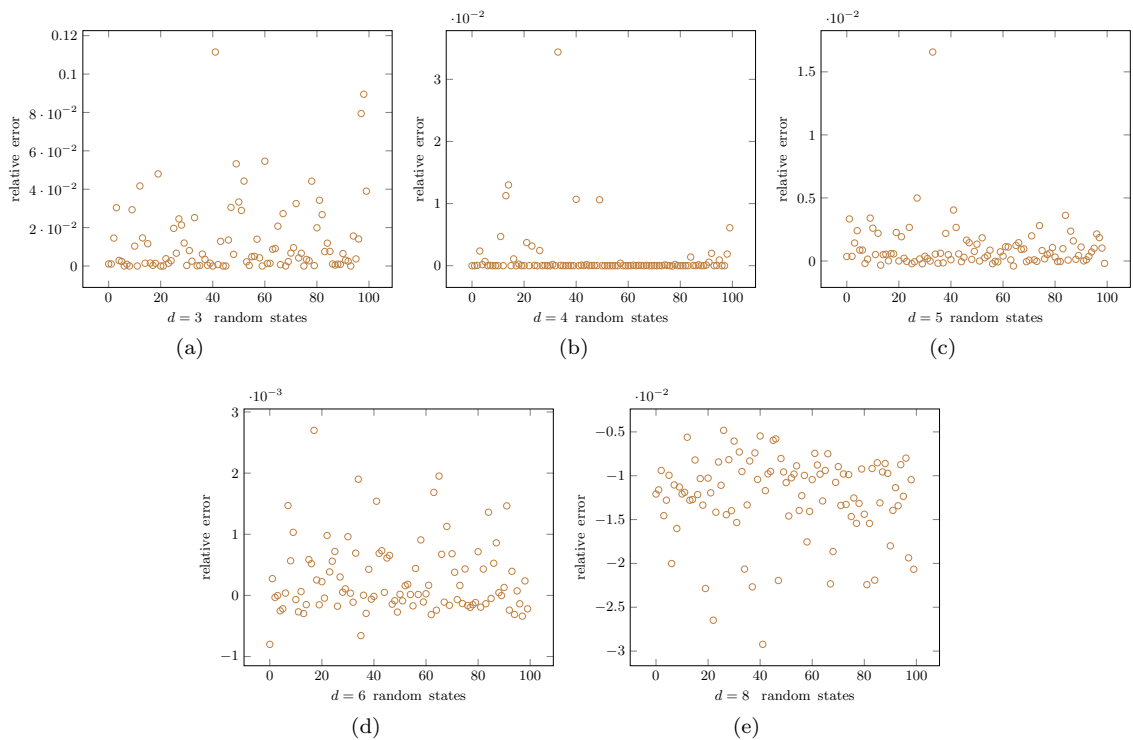
\begin{figure}[htbp]
    \centering
    \subfigure[]{
    \begin{tikzpicture}[align=center,xscale=0.6,yscale=0.6]
    \begin{axis}
    [title={},xlabel={$ d=3 $ \enspace random\enspace states},ylabel=relative\enspace error,legend style={at={(0.3,0.9)},anchor=north,}]
    \addplot [brown ,mark=o,only marks] table {2trit_greedy_error.dat};
    \end{axis}
    \end{tikzpicture}}
\subfigure[]{
    \begin{tikzpicture}[align=center,xscale=0.6,yscale=0.6]
    \begin{axis}
    [title={},xlabel={$ d=4 $\enspace random\enspace states},
    ylabel=relative\enspace error,legend style={at={(0.3,0.9)},anchor=north,}]
    \addplot [brown ,mark=o,only marks] table {4bit_error.dat};
    \end{axis}
    \end{tikzpicture}}
\subfigure[]{
    \begin{tikzpicture}[align=center,xscale=0.6,yscale=0.6]
    \begin{axis}
    [title={},xlabel={$d=5 $\enspace random\enspace states},ylabel=relative\enspace error,legend style={at={(0.3,0.9)},anchor=north}]
    \addplot [brown ,mark=o,only marks] table {5level_error.dat};
    \end{axis}
    \end{tikzpicture}}
\subfigure[]{
    \begin{tikzpicture}[align=center,xscale=0.6,yscale=0.6]
    \begin{axis}
    [title={},xlabel={$ d=6$\enspace random\enspace states},
    ylabel=relative\enspace error,legend style={at={(0.3,0.9)},anchor=north,}]
    \addplot [brown ,mark=o,only marks] table {6level_error.dat};
    \end{axis}
    \end{tikzpicture}}
\subfigure[]{
    \begin{tikzpicture}[align=center,xscale=0.6,yscale=0.6]
    \begin{axis}
    [title={},xlabel={$d=8$ \enspace random\enspace states},ylabel=relative\enspace error,legend style={at={(0.3,0.9)},anchor=north,}]
    \addplot [brown ,mark=o, only marks] table {6bit_error.dat};
    \end{axis}
    \end{tikzpicture}}
    \caption{Relative error of the RGNP algorithm for $d=3,4,5,6$ and $8$ cases. For each case, Table.~\ref{tab:3} writes the average of relative error of 100 states.}
    \label{fig:app:four}
\end{figure}

\twocolumngrid

\bibliographystyle{apsrev4-2} 
\bibliography{bib.bib}

\begin{thebibliography}{27}%
\makeatletter
\providecommand \@ifxundefined [1]{%
 \@ifx{#1\undefined}
}%
\providecommand \@ifnum [1]{%
 \ifnum #1\expandafter \@firstoftwo
 \else \expandafter \@secondoftwo
 \fi
}%
\providecommand \@ifx [1]{%
 \ifx #1\expandafter \@firstoftwo
 \else \expandafter \@secondoftwo
 \fi
}%
\providecommand \natexlab [1]{#1}%
\providecommand \enquote  [1]{``#1''}%
\providecommand \bibnamefont  [1]{#1}%
\providecommand \bibfnamefont [1]{#1}%
\providecommand \citenamefont [1]{#1}%
\providecommand \href@noop [0]{\@secondoftwo}%
\providecommand \href [0]{\begingroup \@sanitize@url \@href}%
\providecommand \@href[1]{\@@startlink{#1}\@@href}%
\providecommand \@@href[1]{\endgroup#1\@@endlink}%
\providecommand \@sanitize@url [0]{\catcode `\\12\catcode `\$12\catcode
  `\&12\catcode `\#12\catcode `\^12\catcode `\_12\catcode `\%12\relax}%
\providecommand \@@startlink[1]{}%
\providecommand \@@endlink[0]{}%
\providecommand \url  [0]{\begingroup\@sanitize@url \@url }%
\providecommand \@url [1]{\endgroup\@href {#1}{\urlprefix }}%
\providecommand \urlprefix  [0]{URL }%
\providecommand \Eprint [0]{\href }%
\providecommand \doibase [0]{https://doi.org/}%
\providecommand \selectlanguage [0]{\@gobble}%
\providecommand \bibinfo  [0]{\@secondoftwo}%
\providecommand \bibfield  [0]{\@secondoftwo}%
\providecommand \translation [1]{[#1]}%
\providecommand \BibitemOpen [0]{}%
\providecommand \bibitemStop [0]{}%
\providecommand \bibitemNoStop [0]{.\EOS\space}%
\providecommand \EOS [0]{\spacefactor3000\relax}%
\providecommand \BibitemShut  [1]{\csname bibitem#1\endcsname}%
\let\auto@bib@innerbib\@empty
\bibitem [{\citenamefont
  {Shannon}(1948)}]{https://doi.org/10.1002/j.1538-7305.1948.tb01338.x}%
  \BibitemOpen
  \bibfield  {author} {\bibinfo {author} {\bibfnamefont {C.~E.}\ \bibnamefont
  {Shannon}},\ }\href
  {https://doi.org/https://doi.org/10.1002/j.1538-7305.1948.tb01338.x}
  {\bibfield  {journal} {\bibinfo  {journal} {Bell System Technical Journal}\
  }\textbf {\bibinfo {volume} {27}},\ \bibinfo {pages} {379} (\bibinfo {year}
  {1948})},\ \Eprint
  {https://arxiv.org/abs/https://onlinelibrary.wiley.com/doi/pdf/10.1002/j.1538-7305.1948.tb01338.x}
  {https://onlinelibrary.wiley.com/doi/pdf/10.1002/j.1538-7305.1948.tb01338.x}
  \BibitemShut {NoStop}%
\bibitem [{\citenamefont {Bennett}\ \emph {et~al.}(1999)\citenamefont
  {Bennett}, \citenamefont {Shor}, \citenamefont {Smolin},\ and\ \citenamefont
  {Thapliyal}}]{PhysRevLett.83.3081}%
  \BibitemOpen
  \bibfield  {author} {\bibinfo {author} {\bibfnamefont {C.~H.}\ \bibnamefont
  {Bennett}}, \bibinfo {author} {\bibfnamefont {P.~W.}\ \bibnamefont {Shor}},
  \bibinfo {author} {\bibfnamefont {J.~A.}\ \bibnamefont {Smolin}},\ and\
  \bibinfo {author} {\bibfnamefont {A.~V.}\ \bibnamefont {Thapliyal}},\ }\href
  {https://doi.org/10.1103/PhysRevLett.83.3081} {\bibfield  {journal} {\bibinfo
   {journal} {Phys. Rev. Lett.}\ }\textbf {\bibinfo {volume} {83}},\ \bibinfo
  {pages} {3081} (\bibinfo {year} {1999})}\BibitemShut {NoStop}%
\bibitem [{\citenamefont {Bei~Zeng}(2019)}]{quan_info_meets_quan_matter}%
  \BibitemOpen
  \bibfield  {author} {\bibinfo {author} {\bibfnamefont {D.-L. Z. X.-G.~W.}\
  \bibnamefont {Bei~Zeng}, \bibfnamefont {Xie~Chen}},\ }\href
  {https://doi.org/https://doi.org/10.1007/978-1-4939-9084-9} {\emph {\bibinfo
  {title} {Quantum Information Meets Quantum Matter}}}\ (\bibinfo  {publisher}
  {Springer New York, NY},\ \bibinfo {year} {2019})\BibitemShut {NoStop}%
\bibitem [{\citenamefont {Schumacher}\ and\ \citenamefont
  {Westmoreland}(2006)}]{PhysRevA.74.042305}%
  \BibitemOpen
  \bibfield  {author} {\bibinfo {author} {\bibfnamefont {B.}~\bibnamefont
  {Schumacher}}\ and\ \bibinfo {author} {\bibfnamefont {M.~D.}\ \bibnamefont
  {Westmoreland}},\ }\href {https://doi.org/10.1103/PhysRevA.74.042305}
  {\bibfield  {journal} {\bibinfo  {journal} {Phys. Rev. A}\ }\textbf {\bibinfo
  {volume} {74}},\ \bibinfo {pages} {042305} (\bibinfo {year}
  {2006})}\BibitemShut {NoStop}%
\bibitem [{\citenamefont {Xu}\ and\ \citenamefont
  {Zhou}(2021)}]{PhysRevA.104.022610}%
  \BibitemOpen
  \bibfield  {author} {\bibinfo {author} {\bibfnamefont {C.-Q.}\ \bibnamefont
  {Xu}}\ and\ \bibinfo {author} {\bibfnamefont {D.~L.}\ \bibnamefont {Zhou}},\
  }\href {https://doi.org/10.1103/PhysRevA.104.022610} {\bibfield  {journal}
  {\bibinfo  {journal} {Phys. Rev. A}\ }\textbf {\bibinfo {volume} {104}},\
  \bibinfo {pages} {022610} (\bibinfo {year} {2021})}\BibitemShut {NoStop}%
\bibitem [{\citenamefont {Groisman}\ \emph {et~al.}(2005)\citenamefont
  {Groisman}, \citenamefont {Popescu},\ and\ \citenamefont
  {Winter}}]{PhysRevA.72.032317}%
  \BibitemOpen
  \bibfield  {author} {\bibinfo {author} {\bibfnamefont {B.}~\bibnamefont
  {Groisman}}, \bibinfo {author} {\bibfnamefont {S.}~\bibnamefont {Popescu}},\
  and\ \bibinfo {author} {\bibfnamefont {A.}~\bibnamefont {Winter}},\ }\href
  {https://doi.org/10.1103/PhysRevA.72.032317} {\bibfield  {journal} {\bibinfo
  {journal} {Phys. Rev. A}\ }\textbf {\bibinfo {volume} {72}},\ \bibinfo
  {pages} {032317} (\bibinfo {year} {2005})}\BibitemShut {NoStop}%
\bibitem [{\citenamefont {Berta}\ \emph {et~al.}(2018)\citenamefont {Berta},
  \citenamefont {Brand\~ao}, \citenamefont {Majenz},\ and\ \citenamefont
  {Wilde}}]{PhysRevLett.121.040504}%
  \BibitemOpen
  \bibfield  {author} {\bibinfo {author} {\bibfnamefont {M.}~\bibnamefont
  {Berta}}, \bibinfo {author} {\bibfnamefont {F.~G. S.~L.}\ \bibnamefont
  {Brand\~ao}}, \bibinfo {author} {\bibfnamefont {C.}~\bibnamefont {Majenz}},\
  and\ \bibinfo {author} {\bibfnamefont {M.~M.}\ \bibnamefont {Wilde}},\ }\href
  {https://doi.org/10.1103/PhysRevLett.121.040504} {\bibfield  {journal}
  {\bibinfo  {journal} {Phys. Rev. Lett.}\ }\textbf {\bibinfo {volume} {121}},\
  \bibinfo {pages} {040504} (\bibinfo {year} {2018})}\BibitemShut {NoStop}%
\bibitem [{\citenamefont {Korf}(1995)}]{Korf1995FromAT}%
  \BibitemOpen
  \bibfield  {author} {\bibinfo {author} {\bibfnamefont {R.~E.}\ \bibnamefont
  {Korf}},\ }in\ \href@noop {} {\emph {\bibinfo {booktitle} {International
  Joint Conference on Artificial Intelligence}}}\ (\bibinfo {year}
  {1995})\BibitemShut {NoStop}%
\bibitem [{\citenamefont {Schreiber}\ \emph {et~al.}(2018)\citenamefont
  {Schreiber}, \citenamefont {Korf},\ and\ \citenamefont {Moffitt}}]{article1}%
  \BibitemOpen
  \bibfield  {author} {\bibinfo {author} {\bibfnamefont {E.}~\bibnamefont
  {Schreiber}}, \bibinfo {author} {\bibfnamefont {R.}~\bibnamefont {Korf}},\
  and\ \bibinfo {author} {\bibfnamefont {M.}~\bibnamefont {Moffitt}},\ }\href
  {https://doi.org/10.1145/3184400} {\bibfield  {journal} {\bibinfo  {journal}
  {Journal of the ACM}\ }\textbf {\bibinfo {volume} {65}},\ \bibinfo {pages}
  {1} (\bibinfo {year} {2018})}\BibitemShut {NoStop}%
\bibitem [{\citenamefont {Gent}\ and\ \citenamefont
  {Walsh}(1998)}]{https://doi.org/10.1111/0824-7935.00069}%
  \BibitemOpen
  \bibfield  {author} {\bibinfo {author} {\bibfnamefont {I.~P.}\ \bibnamefont
  {Gent}}\ and\ \bibinfo {author} {\bibfnamefont {T.}~\bibnamefont {Walsh}},\
  }\href {https://doi.org/https://doi.org/10.1111/0824-7935.00069} {\bibfield
  {journal} {\bibinfo  {journal} {Computational Intelligence}\ }\textbf
  {\bibinfo {volume} {14}},\ \bibinfo {pages} {430} (\bibinfo {year} {1998})},\
  \Eprint
  {https://arxiv.org/abs/https://onlinelibrary.wiley.com/doi/pdf/10.1111/0824-7935.00069}
  {https://onlinelibrary.wiley.com/doi/pdf/10.1111/0824-7935.00069}
  \BibitemShut {NoStop}%
\bibitem [{\citenamefont {Graham}(1969)}]{doi:10.1137/0117039}%
  \BibitemOpen
  \bibfield  {author} {\bibinfo {author} {\bibfnamefont {R.~L.}\ \bibnamefont
  {Graham}},\ }\href {https://doi.org/10.1137/0117039} {\bibfield  {journal}
  {\bibinfo  {journal} {SIAM Journal on Applied Mathematics}\ }\textbf
  {\bibinfo {volume} {17}},\ \bibinfo {pages} {416} (\bibinfo {year} {1969})},\
  \Eprint {https://arxiv.org/abs/https://doi.org/10.1137/0117039}
  {https://doi.org/10.1137/0117039} \BibitemShut {NoStop}%
\bibitem [{\citenamefont {Xiao}(1704)}]{Xiao2017/04}%
  \BibitemOpen
  \bibfield  {author} {\bibinfo {author} {\bibfnamefont {X.}~\bibnamefont
  {Xiao}},\ }in\ \href {https://doi.org/10.2991/fmsmt-17.2017.102} {\emph
  {\bibinfo {booktitle} {Proceedings of the 2017 5th International Conference
  on Frontiers of Manufacturing Science and Measuring Technology (FMSMT
  2017)}}}\ (\bibinfo  {publisher} {Atlantis Press},\ \bibinfo {year}
  {2017/04})\ pp.\ \bibinfo {pages} {486--489}\BibitemShut {NoStop}%
\bibitem [{\citenamefont {Coffman}\ and\ \citenamefont
  {Sethi}(1976)}]{10.1145/800200.806205}%
  \BibitemOpen
  \bibfield  {author} {\bibinfo {author} {\bibfnamefont {E.~G.}\ \bibnamefont
  {Coffman}}\ and\ \bibinfo {author} {\bibfnamefont {R.}~\bibnamefont
  {Sethi}},\ }in\ \href {https://doi.org/10.1145/800200.806205} {\emph
  {\bibinfo {booktitle} {Proceedings of the 1976 ACM SIGMETRICS Conference on
  Computer Performance Modeling Measurement and Evaluation}}},\ \bibinfo
  {series and number} {SIGMETRICS '76}\ (\bibinfo  {publisher} {Association for
  Computing Machinery},\ \bibinfo {address} {New York, NY, USA},\ \bibinfo
  {year} {1976})\ p.\ \bibinfo {pages} {306–310}\BibitemShut {NoStop}%
\bibitem [{\citenamefont {Chen}(1993)}]{Chen1993ANO}%
  \BibitemOpen
  \bibfield  {author} {\bibinfo {author} {\bibfnamefont {B.}~\bibnamefont
  {Chen}},\ }\href@noop {} {\bibfield  {journal} {\bibinfo  {journal} {Oper.
  Res. Lett.}\ }\textbf {\bibinfo {volume} {14}},\ \bibinfo {pages} {139}
  (\bibinfo {year} {1993})}\BibitemShut {NoStop}%
\bibitem [{\citenamefont {Deuermeyer}\ \emph {et~al.}(1982)\citenamefont
  {Deuermeyer}, \citenamefont {Friesen},\ and\ \citenamefont
  {Langston}}]{doi:10.1137/0603019}%
  \BibitemOpen
  \bibfield  {author} {\bibinfo {author} {\bibfnamefont {B.~L.}\ \bibnamefont
  {Deuermeyer}}, \bibinfo {author} {\bibfnamefont {D.~K.}\ \bibnamefont
  {Friesen}},\ and\ \bibinfo {author} {\bibfnamefont {M.~A.}\ \bibnamefont
  {Langston}},\ }\href {https://doi.org/10.1137/0603019} {\bibfield  {journal}
  {\bibinfo  {journal} {SIAM Journal on Algebraic Discrete Methods}\ }\textbf
  {\bibinfo {volume} {3}},\ \bibinfo {pages} {190} (\bibinfo {year} {1982})},\
  \Eprint {https://arxiv.org/abs/https://doi.org/10.1137/0603019}
  {https://doi.org/10.1137/0603019} \BibitemShut {NoStop}%
\bibitem [{\citenamefont {Csirik}\ \emph {et~al.}(1992)\citenamefont {Csirik},
  \citenamefont {Kellerer},\ and\ \citenamefont {Woeginger}}]{CSIRIK1992281}%
  \BibitemOpen
  \bibfield  {author} {\bibinfo {author} {\bibfnamefont {J.}~\bibnamefont
  {Csirik}}, \bibinfo {author} {\bibfnamefont {H.}~\bibnamefont {Kellerer}},\
  and\ \bibinfo {author} {\bibfnamefont {G.}~\bibnamefont {Woeginger}},\ }\href
  {https://doi.org/https://doi.org/10.1016/0167-6377(92)90004-M} {\bibfield
  {journal} {\bibinfo  {journal} {Operations Research Letters}\ }\textbf
  {\bibinfo {volume} {11}},\ \bibinfo {pages} {281} (\bibinfo {year}
  {1992})}\BibitemShut {NoStop}%
\bibitem [{\citenamefont {Wu}(2005)}]{WU2005407}%
  \BibitemOpen
  \bibfield  {author} {\bibinfo {author} {\bibfnamefont {B.~Y.}\ \bibnamefont
  {Wu}},\ }\href {https://doi.org/https://doi.org/10.1016/j.tcs.2005.08.032}
  {\bibfield  {journal} {\bibinfo  {journal} {Theoretical Computer Science}\
  }\textbf {\bibinfo {volume} {349}},\ \bibinfo {pages} {407} (\bibinfo {year}
  {2005})}\BibitemShut {NoStop}%
\bibitem [{\citenamefont {Gra{\ss}}\ \emph {et~al.}(2016)\citenamefont
  {Gra{\ss}}, \citenamefont {Ravent{\'o}s}, \citenamefont {Juli{\'a}-D{\'i}az},
  \citenamefont {Gogolin},\ and\ \citenamefont {Lewenstein}}]{Graß2016}%
  \BibitemOpen
  \bibfield  {author} {\bibinfo {author} {\bibfnamefont {T.}~\bibnamefont
  {Gra{\ss}}}, \bibinfo {author} {\bibfnamefont {D.}~\bibnamefont
  {Ravent{\'o}s}}, \bibinfo {author} {\bibfnamefont {B.}~\bibnamefont
  {Juli{\'a}-D{\'i}az}}, \bibinfo {author} {\bibfnamefont {C.}~\bibnamefont
  {Gogolin}},\ and\ \bibinfo {author} {\bibfnamefont {M.}~\bibnamefont
  {Lewenstein}},\ }\href {https://doi.org/10.1038/ncomms11524} {\bibfield
  {journal} {\bibinfo  {journal} {Nature Communications}\ }\textbf {\bibinfo
  {volume} {7}},\ \bibinfo {pages} {11524} (\bibinfo {year}
  {2016})}\BibitemShut {NoStop}%
\bibitem [{\citenamefont {Anikeeva}\ \emph {et~al.}(2021)\citenamefont
  {Anikeeva}, \citenamefont {Markovi\ifmmode~\acute{c}\else \'{c}\fi{}},
  \citenamefont {Borish}, \citenamefont {Hines}, \citenamefont {Rajagopal},
  \citenamefont {Cooper}, \citenamefont {Periwal}, \citenamefont
  {Safavi-Naeini}, \citenamefont {Davis},\ and\ \citenamefont
  {Schleier-Smith}}]{PRXQuantum.2.020319}%
  \BibitemOpen
  \bibfield  {author} {\bibinfo {author} {\bibfnamefont {G.}~\bibnamefont
  {Anikeeva}}, \bibinfo {author} {\bibfnamefont {O.}~\bibnamefont
  {Markovi\ifmmode~\acute{c}\else \'{c}\fi{}}}, \bibinfo {author}
  {\bibfnamefont {V.}~\bibnamefont {Borish}}, \bibinfo {author} {\bibfnamefont
  {J.~A.}\ \bibnamefont {Hines}}, \bibinfo {author} {\bibfnamefont {S.~V.}\
  \bibnamefont {Rajagopal}}, \bibinfo {author} {\bibfnamefont {E.~S.}\
  \bibnamefont {Cooper}}, \bibinfo {author} {\bibfnamefont {A.}~\bibnamefont
  {Periwal}}, \bibinfo {author} {\bibfnamefont {A.}~\bibnamefont
  {Safavi-Naeini}}, \bibinfo {author} {\bibfnamefont {E.~J.}\ \bibnamefont
  {Davis}},\ and\ \bibinfo {author} {\bibfnamefont {M.}~\bibnamefont
  {Schleier-Smith}},\ }\href {https://doi.org/10.1103/PRXQuantum.2.020319}
  {\bibfield  {journal} {\bibinfo  {journal} {PRX Quantum}\ }\textbf {\bibinfo
  {volume} {2}},\ \bibinfo {pages} {020319} (\bibinfo {year}
  {2021})}\BibitemShut {NoStop}%
\bibitem [{\citenamefont {Sinitsyn}\ and\ \citenamefont
  {Yan}(2023)}]{sinitsyn2023topologically}%
  \BibitemOpen
  \bibfield  {author} {\bibinfo {author} {\bibfnamefont {N.~A.}\ \bibnamefont
  {Sinitsyn}}\ and\ \bibinfo {author} {\bibfnamefont {B.}~\bibnamefont {Yan}},\
  }\href@noop {} {\bibinfo {title} {Topologically protected grover's oracle for
  the partition problem}} (\bibinfo {year} {2023}),\ \Eprint
  {https://arxiv.org/abs/2304.10488} {arXiv:2304.10488 [quant-ph]} \BibitemShut
  {NoStop}%
\bibitem [{\citenamefont {Nielsen}\ and\ \citenamefont
  {Chuang}(2010)}]{nielsen_chuang_2010}%
  \BibitemOpen
  \bibfield  {author} {\bibinfo {author} {\bibfnamefont {M.~A.}\ \bibnamefont
  {Nielsen}}\ and\ \bibinfo {author} {\bibfnamefont {I.~L.}\ \bibnamefont
  {Chuang}},\ }\href {https://doi.org/10.1017/CBO9780511976667} {\emph
  {\bibinfo {title} {Quantum Computation and Quantum Information: 10th
  Anniversary Edition}}}\ (\bibinfo  {publisher} {Cambridge University Press},\
  \bibinfo {year} {2010})\BibitemShut {NoStop}%
\bibitem [{\citenamefont {Li}\ and\ \citenamefont {Busch}(2013)}]{LI2013384}%
  \BibitemOpen
  \bibfield  {author} {\bibinfo {author} {\bibfnamefont {Y.}~\bibnamefont
  {Li}}\ and\ \bibinfo {author} {\bibfnamefont {P.}~\bibnamefont {Busch}},\
  }\href {https://doi.org/https://doi.org/10.1016/j.jmaa.2013.06.019}
  {\bibfield  {journal} {\bibinfo  {journal} {Journal of Mathematical Analysis
  and Applications}\ }\textbf {\bibinfo {volume} {408}},\ \bibinfo {pages}
  {384} (\bibinfo {year} {2013})}\BibitemShut {NoStop}%
\bibitem [{\citenamefont {An}\ \emph {et~al.}(2021)\citenamefont {An},
  \citenamefont {Cao}, \citenamefont {Xu},\ and\ \citenamefont
  {Zhou}}]{an2021learning}%
  \BibitemOpen
  \bibfield  {author} {\bibinfo {author} {\bibfnamefont {Z.}~\bibnamefont
  {An}}, \bibinfo {author} {\bibfnamefont {C.}~\bibnamefont {Cao}}, \bibinfo
  {author} {\bibfnamefont {C.-Q.}\ \bibnamefont {Xu}},\ and\ \bibinfo {author}
  {\bibfnamefont {D.~L.}\ \bibnamefont {Zhou}},\ }\href@noop {} {\bibinfo
  {title} {Learning quantum phases via single-qubit disentanglement}} (\bibinfo
  {year} {2021}),\ \Eprint {https://arxiv.org/abs/2107.03542} {arXiv:2107.03542
  [quant-ph]} \BibitemShut {NoStop}%
\bibitem [{\citenamefont {Araki}\ and\ \citenamefont {Lieb}(1970)}]{Araki1970}%
  \BibitemOpen
  \bibfield  {author} {\bibinfo {author} {\bibfnamefont {H.}~\bibnamefont
  {Araki}}\ and\ \bibinfo {author} {\bibfnamefont {E.~H.}\ \bibnamefont
  {Lieb}},\ }\href {https://doi.org/10.1007/BF01646092} {\bibfield  {journal}
  {\bibinfo  {journal} {Communications in Mathematical Physics}\ }\textbf
  {\bibinfo {volume} {18}},\ \bibinfo {pages} {160} (\bibinfo {year}
  {1970})}\BibitemShut {NoStop}%
\bibitem [{\citenamefont {Carlen}\ and\ \citenamefont
  {Lieb}(2012)}]{Carlen2012}%
  \BibitemOpen
  \bibfield  {author} {\bibinfo {author} {\bibfnamefont {E.~A.}\ \bibnamefont
  {Carlen}}\ and\ \bibinfo {author} {\bibfnamefont {E.~H.}\ \bibnamefont
  {Lieb}},\ }\href {https://doi.org/10.1007/s11005-012-0565-6} {\bibfield
  {journal} {\bibinfo  {journal} {Letters in Mathematical Physics}\ }\textbf
  {\bibinfo {volume} {101}},\ \bibinfo {pages} {1} (\bibinfo {year}
  {2012})}\BibitemShut {NoStop}%
\bibitem [{\citenamefont {Kingma}\ and\ \citenamefont
  {Ba}(2014)}]{Kingma2014AdamAM}%
  \BibitemOpen
  \bibfield  {author} {\bibinfo {author} {\bibfnamefont {D.~P.}\ \bibnamefont
  {Kingma}}\ and\ \bibinfo {author} {\bibfnamefont {J.}~\bibnamefont {Ba}},\
  }\href@noop {} {\bibfield  {journal} {\bibinfo  {journal} {CoRR}\ }\textbf
  {\bibinfo {volume} {abs/1412.6980}} (\bibinfo {year} {2014})}\BibitemShut
  {NoStop}%
\bibitem [{\citenamefont {Coffman}\ \emph {et~al.}(2000)\citenamefont
  {Coffman}, \citenamefont {Kundu},\ and\ \citenamefont
  {Wootters}}]{PhysRevA.61.052306}%
  \BibitemOpen
  \bibfield  {author} {\bibinfo {author} {\bibfnamefont {V.}~\bibnamefont
  {Coffman}}, \bibinfo {author} {\bibfnamefont {J.}~\bibnamefont {Kundu}},\
  and\ \bibinfo {author} {\bibfnamefont {W.~K.}\ \bibnamefont {Wootters}},\
  }\href {https://doi.org/10.1103/PhysRevA.61.052306} {\bibfield  {journal}
  {\bibinfo  {journal} {Phys. Rev. A}\ }\textbf {\bibinfo {volume} {61}},\
  \bibinfo {pages} {052306} (\bibinfo {year} {2000})}\BibitemShut {NoStop}%
\end{thebibliography}%

\end{document}